\def\hatx{\hat{x}}
\def\hatv{\hat{v}}
\def\hata{\hat{a}}
\global\long\def\EE{\mathbb{E}}
\global\long\def\PP{\mathbb{P}}
\global\long\def\11{\mathbbm{1}}
\def\ulines{\underline{s}}
\def\ulineS{\underline{S}}
\def\ulinem{\underline{m}}
\def\ulineM{\underline{M}}
\def\ulineX{\underline{X}}
\def\ulinex{\underline{x}}
\def\ulineCalS{\underline{\mathcal{S}}}
\def\ulineCalX{\underline{\mathcal{X}}}
\def\ulineCalV{\underline{\mathcal{V}}}
\def\ulineCalI{\underline{\mathcal{I}}}
\def\ulineY{\underline{Y}}
\def\ulineX{\underline{X}}
\def\ulineY{\underline{Y}}
\def\ulineV{\underline{V}}
\def\ulinev{\underline{v}}
\def\ulinea{\underline{a}}
\def\tildea{\tilde{a}}
\def\tildem{\tilde{m}}
\def\3To1BC{$3-$to$-1$}
\def\define{:{=}~}
\def\naturals{\mathbb{N}}
\def\hatm{\hat{m}}
\def\fieldq{\mathcal{F}_{q}}
\newif\ifProofForORDBC
\def\parsec{\par\noindent}
\def\med{\medskip\parsec}
\newif\ifJournal
\newtheorem{theorem}{Theorem}
\def\ulines{\underline{s}}
\def\ulineS{\underline{S}}
\def\ulineX{\underline{X}}
\def\ulinex{\underline{x}}
\def\ulineCalS{\underline{\mathcal{S}}}
\def\ulineCalX{\underline{\mathcal{X}}}
\def\ulineCalV{\underline{\mathcal{V}}}
\def\ulineY{\underline{Y}}
\def\ulineX{\underline{X}}
\def\ulineY{\underline{Y}}
\def\ulineV{\underline{V}}
\def\ulinev{\underline{v}}
\def\ulinea{\underline{a}}
\def\CalF{\mathcal{F}}
\def\CalJ{\mathcal{J}}
\def\CalI{\mathcal{I}}
\def\CalP{\mathcal{P}}
\def\CalS{\mathcal{S}}
\def\CalM{\mathcal{M}}
\def\CalV{\mathcal{V}}
\def\CalX{\mathcal{X}}
\def\EE{\mathbb{E}}
\def\PP{\mathbb{P}}
\def\WW{\mathbb{W}}
\def\11{\mathbbm{1}}
\def\TDelta{\mathcal{T}_{\delta}^{(n)}}
\def\deq{\mathrel{\ensurestackMath{\stackon[1pt]{=}{\scriptstyle\Delta}}}}
\def\define{\mathrel{\ensurestackMath{\stackon[1pt]{=}{\scriptstyle\Delta}}}}
\newcommand{\comment}[1]{}
\begin{document}

\sloppy
\newtheorem{remark}{\it Remark}
\newtheorem{thm}{Theorem}
\newtheorem{corollary}{Corollary}
\newtheorem{definition}{Definition}
\newtheorem{lemma}{Lemma}
\newtheorem{example}{Example}
\newtheorem{prop}{Proposition}
\title{\huge Computing Sum of Sources over a Classical-Quantum MAC}
\author{\IEEEauthorblockN{Touheed Anwar Atif\IEEEauthorrefmark{1},
		Arun Padakandla\IEEEauthorrefmark{2} and  S. Sandeep Pradhan\IEEEauthorrefmark{1} \\}
	\IEEEauthorblockA{Department of Electrical Engineering and Computer Science,\\
		\IEEEauthorrefmark{1}University of Michigan, Ann Arbor, USA.\\
		\IEEEauthorrefmark{2}University of Tennessee, Knoxville, USA\\
		Email: \tt touheed@umich.edu, arunpr@utk.edu, pradhanv@umich.edu}}

\maketitle

\begin{abstract}
We consider the problem of communicating a general bivariate function of two classical sources observed at the encoders of a classical-quantum multiple access channel. Building on the techniques developed for the case of a classical channel, we propose and analyze a coding scheme based on coset codes. The proposed technique enables the decoder recover the desired function without recovering the sources themselves. We derive a new set of sufficient conditions that are weaker than the current known for identified examples. This work is based on a new ensemble of coset codes that are proven to achieve the capacity of a classical-quantum point-to-point channel.
\end{abstract}

\section{Introduction}
\label{sec:intro}

Early research in quantum state discrimination led to the investigation of the information carrying capacity of quantum states. Suppose Alice - a sender - can prepare any one of the states in the collection $\{\rho_{x} \in \mathcal{D}(\mathcal{H}_{Y}): x \in \mathcal{X}\}$ and Bob - the receiver - has to rely on a measurement to infer the label $x$ of the state, then what is the largest sub-collection $\mathcal{C} \subseteq \mathcal{X}$ of states that Bob can distinguish perfectly? Studying this question in a Shannon-theoretic sense, Schumacher, Westmoreland \cite{199707PhyRev_SchWes} and Holevo \cite{199801TIT_Hol} characterized the exponential growth of this sub-collection, thereby characterizing the capacity of a classical-quantum (CQ) point-to-point (PTP) channel. In the following years, generalizations of this question with multiple senders and/or receivers have been studied with an aim of characterizing the corresponding information carrying capacity of quantum states in network scenarios \cite{winter2001capacity}. 

In this work, we consider the problem of computing functions of information sources over a CQ multiple access channel (MAC). Let $(\rho_{x_{1}x_{2}} \in \mathcal{D}(\mathcal{H}_{Y}): (x_{1},x_{2}) \in \mathcal{X}_{1} \times \mathcal{X}_{2} )$ model a CQ-MAC. Sender $j$ - the party having access to the choice of label $x_{j} \in \mathcal{X}_{j}$ - observes a classical information stream $S_{jt} \in \mathcal{S}_{j}: t \geq 1$. The pairs $(S_{1t},S_{2t}) : t \geq 1$ are independent and identically distributed (IID) with a single-letter joint distribution $\mathbb{W}_{S_{1}S_{2}}$. The receiver, who is provided with the prepared quantum state, intends to reconstruct a specific function $f(S_{1},S_{2})$ of the information observed by the senders. The question of interest is under what conditions, specified in terms of the CQ-MAC, $\mathbb{W}_{S_{1}S_{2}}$ and $f$, can the receiver reconstruct the desired function losslessly? 

The conventional approach to characterizing sufficient conditions for this problem relies on enabling the receiver reconstruct the pair of classical source sequences. Since the receiver is only interested in recovering the bivariate function $f$, and not the pair, this approach can be strictly sub-optimal. Can we exploit this and design a more efficient communication strategy, thereby weakening the set of sufficient conditions? In this work, we present one such communication strategy for a general CQ-MAC that is more efficient than the conventional approach. This strategy is based on 
asymptotically good random nested coset codes. 
We analyze its performance and derive new sufficient conditions for a general problem instance and identify examples for which the derived conditions are strictly weaker.

\begin{figure}
    \centering
    \includegraphics[width=3.7in]{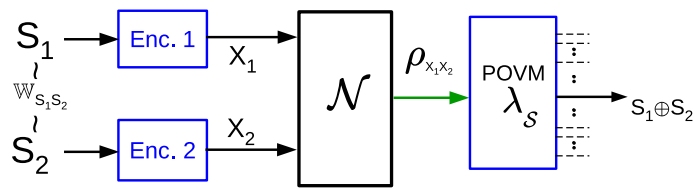}
    \caption{CQ-MAC used for computing sum of classical sources.}
    \label{Fig:ComputeSumOfSources}
\end{figure}
Our findings here are built on the ideas developed in the classical setting. Focusing on a source coding formulation, i.e. a noiseless MAC, K\"orner and Marton \cite{197903TIT_KorMar} devised an ingenious coding technique that enabled the receiver recover the sum of the sources without recovering either source. In \cite{200710TIT_NazGas}, the linearity of the K\"orner-Marton (KM) source coding map was further exploited to enable the receiver recover the sum of the sources using \textit{only} the \textit{sum} of the KM indices, not even requiring the pair. Leveraging this observation and focusing on the subclass of additive MACs, specific MAC channel coding techniques are devised in \cite{200710TIT_NazGas} that enabled the receiver recover the sum of two channel coding message indices. 

The techniques of \cite{197903TIT_KorMar}, \cite{200710TIT_NazGas} are instances of a broader framework of coding strategies. Decoding functions of sources or channel inputs efficiently require codes endowed with algebraic closure properties. To emphasize, the conventional approach of deriving inner bounds/achievable rate region by analyzing expected performance of IID random codes is incapable of yielding performance limits - capacity or rate-distortion regions as the case may be- in network communication scenarios. To improve upon this, it is necessary to analyze the expected performance of random codes endowed with algebraic closure properties. In a series of works \cite{2020Bk_PraPadShi}, an information theoretic study of the latter codes has been carried out yielding new inner bounds for multiple network communication scenarios. 

In this work, we embark on developing these ideas in the CQ setup. After having provided the problem statement in Sec.~\ref{sec:prelims}, we focus on a simplified CQ MAC and illustrate the core idea of our coding scheme. The latter relies on developing a \textit{nested coset code} (NCC) based communication scheme for a CQ PTP channel and analyzing its performance (Sec.~\ref{Sec:NCCAchieveCQ-PTPCapacity}). Leveraging this building block, we design and analyze the performance of an NCC-based coding scheme for computing sum over a general CQ-MAC (Sec.~\ref{sec:arbitraryCQ-MAC}). Going further we generalize this idea for computing arbitrary functions over a general CQ-MAC.

In this work, we consider a classical-quantum generalization of the 

that Alice prepared. What is the size of the largest sub-collection that can be perfectly distinguished by Bob? 

capacity of a classical-quantum (CQ) (point-to-point (PTP)) channel. 

In this case, how many states can Alice prepare that can be perfectly distinguished by Bob? Studying this question in a Shannon-theoretic sense, Schumacher, Westmoreland and Holevo characterized the capacity of a classical-quantum (CQ) (point-to-point (PTP)) channel. These and other related findings have led to a systematic study of the information carrying capacity of CQ channels with multiple senders and/or receivers. While being aided by the ideas developed in network information theory, this study is met with a unique set of challenges owing to non-commutative nature of the underlying operator algebra. Indeed, quantum analogues of even well-established ideas such as joint decoding remains to be discovered.

Early research in quantum state discrimination led to the investigation of the information carrying capacity of quantum states. Suppose Alice - a sender - can prepare any one of the states in the collection $\{\rho_{x} : x \in \mathcal{X}\}$ and Bob - the receiver - has to rely on a measurement to infer the label $x$ of the state that Alice prepared. In this case, how many states can Alice prepare that can be perfectly distinguished by Bob? Studying this question in a Shannon-theoretic sense, Schumacher, Westmoreland and Holevo characterized the capacity of a classical-quantum (CQ) (point-to-point (PTP)) channel. These and other related findings have led to a systematic study of the information carrying capacity of CQ channels with multiple senders and/or receivers. While being aided by the ideas developed in network information theory, this study is met with a unique set of challenges owing to non-commutative nature of the underlying operator algebra. Indeed, quantum analogues of even well-established ideas such as joint decoding remains to be discovered.

Our pursuit of characterizing capacity regions of multi-terminal networks

Our findings build upon a new ensemble of codes for communication over classical-quantum channels that yield strictly better performance for network scenarios.

}\end{comment}
\section{Preliminaries and Problem Statement}
\label{sec:prelims}
We supplement the notation in \cite{2013Bk_Wil} with the following. For positive integer $n$, $[n] \define \left\{1,\cdots,n \right\}$. For a Hilbert space $\mathcal{H}$, $\mathcal{L}(\mathcal{H}),\mathcal{P}(\mathcal{H})$ and $\mathcal{D}(\mathcal{H})$ denote the collection of linear, positive and density operators acting on $\mathcal{H}$, respectively. 
The von Neumann entropy of a density operator $\rho \in \mathcal{D}(\mathcal{H})$ is denoted by $S(\rho)$. 
Given any ensemble $\{p_i, \rho_i\}_{i\in [1,m]}$, the Holevo information \cite{holevo} is denoted as $\chi \big( \{p_i; \rho_i\}\big)$.  A POVM acting on $\mathcal{H}$ is a collection $\lambda_\mathcal{X}\deq \{\lambda_x\}_{x \in \mathcal{X}}$ of positive operators that form a resolution of the identity: $\sum_{x \in \mathcal{X}} \lambda_x=I$, where $\mathcal{X}$ is a finite set.
We employ an \underline{underline} notation to aggregate objects of similar type. For example, $\ulines$ denotes $ (s_{1},s_{2})$, $\ulinex^{n}$ denotes $(x_{1}^{n},x_{2}^{n})$, $\ulineCalS$ denotes the Cartesian product $\mathcal{S}_{1}\times \mathcal{S}_{2}$.

Consider a (generic) \textit{CQ-MAC} $(\rho_{x_{1}x_{2}} \in \mathcal{D}(\mathcal{H}_{Y}): (x_{1},x_{2}) \in \mathcal{X}_{1}\times \mathcal{X}_{2})$ specified through (i) finite sets $\mathcal{X}_{j} : j \in [2]$, (ii) Hilbert space $\mathcal{H}_{Y}$, and (iii) a collection $( \rho_{x_{1},x_{2}} \in \mathcal{D}(\mathcal{H}_{Y} ) :  (x_{1},x_{2}) \in \mathcal{X}_{1}\times \mathcal{X}_{2})$ of density operators. This CQ-MAC is employed to enable the receiver reconstruct a bivariate function of the classical information streams observed by the senders. Let $\mathcal{S}_{1},\mathcal{S}_{2}$ be finite sets and $(S_{1},S_{2}) \in \mathcal{S}_{1}\times \mathcal{S}_{2}$ distributed with PMF $\mathbb{W}_{S_{1}S_{2}}$ models the pair of information sources observed at the encoders. Specifically, sender $j$ observes the sequence $S_{jt} \in \mathcal{S}_{j}: t \geq 1$ and the sequence $(S_{1t},S_{2t}): t \geq 1$ are IID with single-letter PMF $\mathbb{W}_{S_{1}S_{2}}$. The receiver aims to recover the sequence $f(S_{1t},S_{2t}) : t \geq 1$ losslessly, where $f:\mathcal{S}_{1}\times \mathcal{S}_{2} \rightarrow \mathcal{R}$ is a specified function.

A CQ-MAC code $c_{f}=(n,e_{1},e_{2},\lambda_{\mathcal{R}^{n}})$ of block-length $n$ for recovering $f$ consists of two encoders maps $e_{j} : \mathcal{S}^{n} \rightarrow \mathcal{X}_{j}^{n} : j \in [2]$, and a POVM $\lambda_{\mathcal{R}^{n}}= \{ \lambda_{r^{n}} \in \mathcal{P}(\mathcal{H}_{Y}^{\otimes n}) : r^{n} \in \mathcal{R}^{n}\}$. The average error probability of the CQ-MAC code $c_{f}$ is
\begin{eqnarray}
\label{Eqn:ProbStatementGeneralfrrorOfCode}
\overline{\xi}(c_{f}) &=& 1- \sum_{\ulines^{n}:f(\ulines^{n})=r^{n}}\mathbb{W}_{S_{1}S_{2}}^{n}(s_{1}^{n},s_{2}^{n})\tr(\lambda_{r^{n}}\rho^{\otimes n}_{c,\ulines^{n}})
 \nonumber
\end{eqnarray}
where $\rho^{\otimes n}_{c,\ulines^{n}} \define \otimes_{i=1}^{n}\rho_{x_{1i}(s_{1}^{n})x_{2i}(s_{2}^{n})}$, where $e_{j}(s_{j}^{n}) = x_{j1}(s_{j}^{n}),x_{j2}(s_{j}^{n}),\cdots, x_{jn}(s_{j}^{n})$ for $j\in [2]$.

A function $f$ of the sources $\mathbb{W}_{S_{1}S_{2}}$ is said to be reconstructible over a CQ-MAC if for $\epsilon > 0$, $\exists$ a sequence $c_{f}^{(n)} = (n,e_{1}^{(n)},e_{2}^{(n)},\lambda_{\mathcal{R}^{n}}) : n \geq 1$ such that $\lim_{n \rightarrow \infty} \overline{\xi}(c_{f}^{(n)}) = 0$.

In this article, we are concerned with the problem of characterizing sufficient conditions under which a function $f$ of the sources $\mathbb{W}_{S_{1}S_{2}}$ is reconstructible over a generic MAC $(\rho_{x_{1}x_{2}} \in \mathcal{D}(\mathcal{H}_{Y}): (x_{1},x_{2}) \in \mathcal{X}_{1}\times \mathcal{X}_{2})$. One of our findings - Proposition \ref{Prop:GeneralFnOverArbCQMAC} - provides a characterization of sufficient conditions in terms of a computable function of the associated objects- density operators that characterize the CQ-MAC, function $f$ and the source distribution $\mathbb{W}_{S_{1}S_{2}}$.

As we shall see, the specific problem of computing sum of sources will play an important role in our work. In this case, $\mathcal{S}=\mathcal{S}_{1}=\mathcal{S}_{2}=\mathcal{F}_{q}$ is a finite field with $q$ elements and the receiver aims to reconstruct $f(S_{1},S_{2})=S_{1}\oplus_{q}S_{2}$ where $\oplus_{q}$ denotes addition in $\mathcal{F}_{q}$. A CQ-MAC code $c_{\oplus}=(n,e_{1},e_{2},\lambda_{\mathcal{S}^{n}})$ of block-length $n$ for recovering the sum consists of two encoders maps $e_{j} : \mathcal{S}^{n} \rightarrow \mathcal{X}_{j}^{n} : j \in [2]$, and a POVM $\lambda_{\mathcal{S}^{n}}=\{ \lambda_{s^{n}} \in \mathcal{P}(\mathcal{H}_{Y}^{\otimes n}) : s^{n} \in \mathcal{S}^{n}\}$. 

Restricting $f$ to a sum, we say the sum of sources $\mathbb{W}_{S_{1}S_{2}}$ over field $\mathcal{F}_{q}$ is reconstructible over a CQ-MAC if $\mathcal{S}_{1}=\mathcal{S}_{2}=\mathcal{F}_{q}$ and the function $f(S_{1},S_{2})=S_{1}\oplus_{q}S_{2}$ is reconstructible over the CQ-MAC. The problem of characterizing sufficient conditions under which a sum of sources is reconstructible over a CQ-MAC plays an important role in this work. One of our findings - Theorem \ref{thm:SumCQ_MAC} - provides a computable characterization of a set of sufficient conditions under which a sum of sources is reconstructible over a CQ-MAC. As the reader will note, this encapsulates the central element of our characterization in Proposition \ref{Prop:GeneralFnOverArbCQMAC}.


We also formalize the notions of a CQ-PTP and CQ-MAC codes for communicating uniform messages. A CQ-MAC code $c_{\ulinem}=(n,\mathcal{I}_{1},\mathcal{I}_{2},e_{1},e_{2},\lambda_{\ulineCalI})$ for a CQ-MAC $(\rho_{\ulinex} \in \mathcal{D}(\mathcal{H}_{Y}): \ulinex \in \ulineCalX)$ consists of (i) index sets $\mathcal{I}_{j} : j \in [2]$, (ii) encoder maps $e_{j}:\mathcal{I}_{j}\rightarrow \mathcal{X}_{j}^{n} : j \in [2]$ and a decoding POVM $\lambda_{\ulineCalI} = \{ \lambda_{\ulinem} \in \mathcal{P}(\mathcal{H}_{Y}^{\otimes n}): \ulinem \in \mathcal{I}_{1}\times \mathcal{I}_{2} \}$. For $\ulinem \in \mathcal{I}_{1}\times \mathcal{I}_{2}$, we let $\rho^{\otimes n}_{c,\ulinem}\define\otimes_{i=1}^{n}\rho_{x_{1i},x_{2i}}$ where $e_{j}(m_{j})=x_{j1}\cdots x_{jn}$ for $j \in [2]$.

A CQ-PTP code $c_{m}=(n,\mathcal{I},e,\lambda_{\mathcal{I}})$ for a CQ-PTP $(\rho_{x} \in \mathcal{D}(\mathcal{H}_{Y}):x \in \mathcal{X})$ consists of (i) an index set $\mathcal{I}$, (ii) and encoder map $e:\mathcal{I}\rightarrow \mathcal{X}^{n}$ and a decoding POVM $\lambda_{\mathcal{I}} = \{ \lambda_{m} \in \mathcal{P}(\mathcal{H}_{Y}^{\otimes n}): m \in \mathcal{I} \}$. For $m \in \mathcal{I}$, we let $\rho_{c,m}^{\otimes n} \define \otimes_{i=1}^{n}\rho_{x_{i}}$ where $e(m) = x_{1}\cdots x_{n}$.

\section{The Central Idea}
\label{Sec:CentralIdea}
Let us consider the specific problem of reconstructing the sum of sources each taking values in $\mathcal{S} = \mathcal{F}_{q}$. We begin by reviewing the KM coding scheme for the case of a noiseless classical MAC. It was shown in \cite{197903TIT_KorMar} the existence of linear code with a parity matrix $H \in \mathcal{S}^{l \times n}$ and decoder map $d:\mathcal{F}_{q}^{l} \rightarrow \mathcal{S}^{n}$ such that
  $\sum_{\ulines^{n} \in \ulineCalS^{n}} \mathbb{W}_{\ulineS}^{n} (\ulines^{n})\mathds{1}_{\{d(Hs_{1}^{n} \oplus_{q} Hs_{2}^{n}) \neq s_{1}^{n}\oplus_{q} s_{2}^{n}\}} \leq \epsilon$,
for any $\epsilon > 0$, and sufficiently large $n$, so long as $ \frac{l\log_{2}q}{n} > H(S_{1}\oplus_{q} S_{2})$. This implies that a receiver equipped with the decoding map $d$ can recover the sum if it possesses the sum $M_{1}^{l}\oplus_{q}M_{2}^{l}$ of the K\"orner-Marton indices $M_{j}^{l} = HS_{j}^{l} : j \in [2]$.

We are therefore led to building an efficient CQ-MAC coding scheme that enables the receiver only reconstruct the sum of the two message indices. Indeed, if the two senders send the KM indices to such a CQ-MAC channel code and the receiver employs the above source decoder $d$ on the decoded sum of the KM indices, it can recover the sum of sources. 
To illustrate the design of the desired CQ-MAC channel code, let us consider a CQ-MAC $(\rho_{x_{1}x_{2}} \in \mathcal{D}(\mathcal{H}_{Y}): (x_{1},x_{2}) \in \mathcal{X}_{1}\times \mathcal{X}_{2})$ wherein $\mathcal{X}_{1}=\mathcal{X}_{2}=\mathcal{F}_{q}$ and the collection $\rho_{\ulinex} : \ulinex \in \ulineCalX$ satisfies $\rho_{x_{1}x_{2}}=\rho_{\hatx_{1}\hatx_{2}}$ whenever $x_{1}\oplus_{q} x_{2} = \hatx_{1}\oplus_{q}\hatx_{2}$. Consider a CQ-PTP $(\mathcal{X}=\mathcal{F}_{q},\sigma_{u}:u \in \mathcal{X} )$ where $\sigma_{u}=\rho_{x_{1}\oplus x_{2}}$ for any $(x_{1},x_{2})$ satisfying $x_{1}\oplus_{q} x_{2}=u$. Suppose we are able to communicate over this CQ-PTP via a \textit{linear CQ-PTP code} $\mathcal{C} \subseteq \mathcal{X}^{n}$. Specifically, suppose there exists a generator matrix $G \in \mathcal{X}^{l \times n}$ and a POVM $\{ \lambda_{m^{l}} : m^{l} \in \mathcal{F}_{q}^{l} \}$ so that $ 1-q^{-l}\sum_{m^{l}}\tr(\lambda_{m^{l}}\sigma^{\otimes n}_{m^{l}G}) \leq \epsilon$.
for any $\epsilon > 0$ and sufficiently large $n$, where $\sigma^{\otimes}_{m^{l}G} = \sigma_{x_{1}}\otimes \cdots \otimes \sigma_{x_{n}}$ where $m^{l}G = x^{n}$. We can then use this linear CQ-PTP code as our desired CQ-MAC channel code. Indeed, observe that, suppose both senders employ this same linear CQ-PTP code, then sender $j$ maps its KM index $m_{j}^{l} = Hs_{j}^{n}$ to the channel codeword $x_{j}^{n}=m_{j}^{l}G$. Observe that the structure of the CQ-MAC implies $\rho^{\otimes n}_{x_{1}^{n},x_{2}^{n}} =\sigma^{\otimes n}_{x_{1}^{n} \oplus x_{2}^{n}} = \sigma^{\otimes}_{(m_{1}^{l}\oplus_{q} m_{2}^{l})G}$. If the receiver employs the POVM $\{ \lambda_{m^{l}} : m^{l} \in \mathcal{F}_{q}^{l} \}$ designed for the CQ-PTP, it ends up decoding the sum of the KM indices $m_{1}^{l}\oplus_{q} m_{2}^{l}$, and consequently,  recover the sum of the sources.

A careful analysis of the above idea reveals that two MAC channel codes employed by the encoders do not `blow up' when added, is crucial to the efficiency of the above scheme. A linear code being algebraically closed enables this. However, the codewords of a random linear code are uniformly distributed and cannot achieve the capacity of an arbitrary classical PTP channel, let alone a CQ-PTP channel. We are therefore forced to enlarge a linear code to identify sufficiently many codewords of the desired empirical distribution. We are thus led to a \textit{nested coset code} (NCC)\cite{201301arXivComputation_PadPra}. A NCC comprises of cosets of a coarse linear code within a fine code. Within each coset, we can identify a codeword of the desired empirical distribution. We choose as many cosets as the number of messages.
Analogous to our illustration above where we chose a linear code that achieves the capacity of the CQ-PTP $(\mathcal{X}=\mathcal{F}_{q},\sigma_{u}:u \in \mathcal{X})$, our first step (Sec.~\ref{Sec:NCCAchieveCQ-PTPCapacity}) is to design a NCC with its POVM that can achieve capacity of an arbitrary CQ PTP. Our second step is to endow both senders with this same NCC and analyze decoding the sum of the messages. This gets us to our next challenge - How do we analyze decoding their message sum, for a general CQ-MAC $\rho_{\ulinex} : \ulinex \in \ulineCalX$ for which $x_{1}\oplus_{q} x_{2} = \hatx_{1}\oplus_{q}\hatx_{2}$ does \textit{not} necessarily imply $\rho_{x_{1}x_{2}}=\rho_{\hatx_{1}\hatx_{2}}$. In Sec.~\ref{sec:theorem2}, we address this challenge, leverage our findings in Sec.~\ref{Sec:NCCAchieveCQ-PTPCapacity} and generalize the idea
for a general CQ-MAC.

\section{Nested Coset Codes Achieve Capacity of CQ-PTP}
\label{Sec:NCCAchieveCQ-PTPCapacity}
We begin by formalizing the structure of an NCC.

\begin{definition}
 \label{Defn:NCC}
 An $(n,k,l,g_{I},g_{O/I},b^{n})$ NCC built over a finite field $\mathcal{V}=\mathcal{F}_{q}$ comprises of (i) generator matrices $g_{I} \in \mathcal{V}^{k \times n}$, $g_{O/I} \in \mathcal{V}^{l \times n}$ (ii) a bias vector $b^{n}$, an encoder map $e :\mathcal{V}^{l} \rightarrow \mathcal{V}^{k}$. We let $v^{n}(a,m) = ag_{I}\oplus_{q}mg_{O/I}\oplus_{q}b^{n}: (a,m) \in \mathcal{V}^{k} \times \mathcal{V}^{l}$ denote elements in the range space of the generator matrix $[g_{I}^{t}~ g_{O/I}^{t}]^{t}$.
\end{definition}

\begin{definition}
A CQ-PTP code $(n,\mathcal{I}=\mathcal{F}_{q}^{l},e,\lambda_{\mathcal{I}})$ is an NCC CQ-PTP if there exists an $(n,k,g_I,g_{O/I},b^n)$ NCC such that $e(m) \in \{u^{n}(a,m) : a \in \CalF_q^k\}$ for all $m \in \CalF_q^l$.
\end{definition}

\begin{theorem}
\label{Thm:NCCAchievesCQPTP}
 Given a CQ-PTP $(\rho_{v} \in \mathcal{D}(\mathcal{H}_{Y}): v \in \CalF_q)$ and a PMF $p_{V}$ on $\mathcal{F}_{q}$, $\epsilon > 0$ there exists a CQ-PTP code $c=(n,\mathcal{I}=\mathcal{F}_{q}^{l},e,\lambda_{\mathcal{I}})$ such that (i) $q^{-l}\sum_{m \in [\mathcal{I}]}\sum_{\hatm \neq [\mathcal{I}]\setminus\{m\}}\tr(\lambda_{\hatm} \rho^{\otimes n}_{c,m}) \leq \epsilon$, (ii) $c=(n,\mathcal{I}=\mathcal{F}_{q}^{l},e,\lambda_{\mathcal{I}})$ is a NCC CQ-PTP, (iii) $\frac{k\log_{2}q}{n} >  \log_{2}q - H(V)$ and $\frac{(k+l)\log_{2}q}{n} < \log_2{q} - H(V) +\chi(\{p_v,\rho_v\}) $ for all $n$ sufficiently large.
\end{theorem}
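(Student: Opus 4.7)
The plan is to construct a random NCC by sampling $g_I \in \mathcal{F}_q^{k \times n}$, $g_{O/I} \in \mathcal{F}_q^{l \times n}$, and $b^n \in \mathcal{F}_q^n$ uniformly and independently, and to define the encoder $e : \mathcal{F}_q^l \to \mathcal{F}_q^k$ so that $v^n(e(m),m)$ lies in the strongly $p_V$-typical set $T_\delta^{(n)}$ whenever such a preimage exists (an arbitrary default choice is used otherwise). Because each $v^n(a,m)$ is marginally uniform on $\mathcal{F}_q^n$ and any two codewords indexed by distinct $(a,m) \neq (a',m')$ form a uniformly independent pair on $\mathcal{F}_q^n \times \mathcal{F}_q^n$, a standard covering/second-moment argument (as in the classical NCC analysis of \cite{201301arXivComputation_PadPra}) shows that the probability (over the random code) that some coset $\{v^n(a,m):a\in \mathcal{F}_q^k\}$ fails to intersect $T_\delta^{(n)}$ vanishes as $n \to \infty$ provided $\tfrac{k\log_2 q}{n} > \log_2 q - H(V) + \delta'$. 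This yields the first rate condition in (iii).

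For the decoder I would build a Hayashi--Nagaoka square-root POVM from typical projectors. Let $\Pi_\delta^{(n)}$ denote the typical projector of $\bar\sigma := \sum_v p_V(v)\rho_v$ and $\Pi_{\delta,v^n}^{(n)}$ the conditional typical projector of $\rho_{v^n} := \bigotimes_i \rho_{v_i}$. Set $\Lambda_m := \Pi_\delta^{(n)}\, \Pi_{\delta,v^n(e(m),m)}^{(n)}\, \Pi_\delta^{(n)}$ and take $\lambda_m$ to be the corresponding normalized square-root POVM elements. Hayashi--Nagaoka then gives
\begin{equation*}
\tr\!\big((I-\lambda_m)\rho^{\otimes n}_{c,m}\big) \;\leq\; 2\,\tr\!\big((I-\Lambda_m)\rho^{\otimes n}_{c,m}\big) \;+\; 4\sum_{m'\neq m}\tr\!\big(\Lambda_{m'}\rho^{\otimes n}_{c,m}\big).
\end{equation*}
The first term is controlled on the covering-success event by the gentle operator lemma and the standard asymptotics $\tr(\Pi_\delta^{(n)}\rho_{v^n}) \to 1$ and $\tr(\Pi_{\delta,v^n}^{(n)}\rho_{v^n}) \to 1$ for $v^n \in T_\delta^{(n)}$. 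The second (confusion) term is the crux: averaging over the random code and invoking pairwise uniform independence of the codewords lets me replace the operator-valued random variable $\Lambda_{m'}$ by its expectation acting on $\bar\sigma^{\otimes n}$, after which the standard estimates $\Pi_\delta^{(n)}\bar\sigma^{\otimes n}\Pi_\delta^{(n)} \preceq 2^{-n(S(\bar\sigma)-\epsilon)}\Pi_\delta^{(n)}$ together with $\tr(\Pi_{\delta,v^n}^{(n)}) \leq 2^{n(\sum_v p_V(v)S(\rho_v)+\epsilon)}$ yield an expected per-term bound of $2^{-n(\chi(\{p_V,\rho_v\})-\epsilon')}$. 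Summing over the $q^l$ messages gives a total confusion bound of order $q^l \cdot 2^{-n\chi(\{p_V,\rho_v\})}$, which vanishes once $\tfrac{l\log_2 q}{n} < \chi(\{p_V,\rho_v\}) - \epsilon''$. Combined with the covering rate this produces the total-rate condition $\tfrac{(k+l)\log_2 q}{n} < \log_2 q - H(V) + \chi(\{p_V,\rho_v\})$ stated in (iii).

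The principal obstacle I anticipate is rigorously handling the confusion sum, because once the encoder selection $e$ is applied the transmitted codewords are neither mutually independent nor identically distributed across the codebook: the POVM element $\Lambda_{m'}$ and the state $\rho_{c,m}^{\otimes n}$ are coupled through the shared random $(g_I,g_{O/I},b^n)$. The workaround, adapted from the classical NCC blueprint, is to introduce the covering-success indicator event $\mathcal{G}$ and to argue that on $\mathcal{G}$, conditioning on the transmitted codeword $v^n(e(m),m) = v^\star$ leaves the marginal of $v^n(e(m'),m')$ approximately uniform on $T_\delta^{(n)}$, a consequence of the fact that distinct cosets of the random matrix $[g_I^t\ g_{O/I}^t]^t$ are pairwise uniform on $\mathcal{F}_q^n\times \mathcal{F}_q^n$. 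This lets me substitute the random $v^n(e(m'),m')$ inside the trace by an i.i.d.-$p_V^n$ surrogate and reduce the expectation to the familiar HSW template. A standard derandomization step --- selecting the realization of $(g_I,g_{O/I},b^n)$ that beats the averaged bound --- then produces the deterministic NCC CQ-PTP code satisfying (i)--(iii).
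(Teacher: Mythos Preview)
Your proposal is sound and broadly parallels the paper's argument, but the POVM you build differs in a way worth noting. You form $\Lambda_m$ from the single \emph{selected} codeword $v^{n}(e(m),m)$; the paper instead builds a square-root measurement over \emph{all} pairs $(a,m)$ in the fine code, setting $\gamma_{a,m}=\pi_{\rho}\pi_{v^{n}(a,m)}\pi_{\rho}\,\mathds{1}_{\{v^{n}(a,m)\in T_{\delta}^{n}(p_{V})\}}$ and then taking $\lambda_{m}=\sum_{a}\lambda_{a,m}$. This buys the paper a confusion analysis in which the competing operators $\gamma_{\hat a,\hat m}$ do not depend on the encoder selection at all, so pairwise independence of $V^{n}(d,m)$ and $V^{n}(\hat a,\hat m)$ applies directly after one sums over the realized value $d$ of $A_{m}$ and drops the event $\{A_{m}=d\}$. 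The price is an additional within-coset error term $T_{22}$ (competitors $\hat a\neq a_{m}$ with the same $m$) that your construction avoids. Your route still works, but the coupling you flag is real: to make your confusion bound rigorous you must likewise sum over the realized value $d'$ of $A_{m'}$ and drop $\{A_{m'}=d'\}$, which introduces a second factor $q^{k}$ and yields the same $\frac{(2k+l)\log q}{n}$ constraint the paper obtains for $T_{23}$; your heuristic ``$q^{l}\cdot 2^{-n\chi}$'' is that bound only after substituting $\frac{k\log q}{n}\approx\log q-H(V)$, not a direct consequence of pairwise independence of the \emph{selected} codewords. Finally, the paper's decode-into-all-$(a,m)$ choice is not incidental: in the CQ-MAC application that follows, the receiver decodes into the \emph{sum} code $U^{n}(a,m)=aG_{I}\oplus mG_{O/I}\oplus B_{1}^{n}\oplus B_{2}^{n}$ and cannot know which $a=a_{1,m_{1}}\oplus a_{2,m_{2}}$ was implicitly selected by the two encoders, so a decoder that ranges over every $a$ in the coset is essential there, whereas your per-message POVM would not transfer directly.
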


\begin{proof}
In order to achieve a rate $R=\chi(\{p_v,\rho_v\})$, the standard approach is to pick $2^{nR}$ codewords uniformly and independently from $T_{\delta}^{n}(p_{V})$. However, the resulting code is not algebraically closed. On the other hand, if we pick a random generator matrix $G \in \CalF_q^{l\times n}$, with $l = \frac{nR}{\log_{2}q}$, whose entries from $\mathcal{F}_{q}$ are IID uniform, then its range space - the resulting collection of $2^{nR}$ codewords - are uniformly distributed and pairwise independent but not $p_{V}-$typical. 

To satisfy the dual requirements of algebraically closure and $p_{V}-$typicality, we observe the following. If a collection of $q^{k}$ codewords are uniformly distributed in $\mathcal{F}_{q}^{n}$ and pairwise independent, as we found the range space of $G$ to be, then the expected number of codewords that are $p_{V}-$typical is $\frac{q^{k}}{q^{n}}|T_{\delta}^{n}(p_{V})| = \exp\{n\log_{2}q \left(\frac{k}{n} -\left[1-\frac{H(V)}{\log_{2}q}\right]\right)\}$. This indicates that if we pick a generator matrix $G_{I} \in \mathcal{F}^{k \times n}$ with entries uniformly distributed and IID, such that $\frac{k}{n} > 1-\frac{H(V)}{\log_{2}q}$, then its range space will contain codewords that are $p_{V}$-typical. The latter codewords can be used for communication.

Each coset of $G_{I} \in \mathcal{F}^{k \times n}$ where $\frac{k}{n} > 1-\frac{H(V)}{\log_{2}q}$ will play an analogous role as a single codeword in a conventional IID random code. Just as we pick $2^{nR}$ of the latter, we consider $2^{nR}$ cosets of $G_{I}$ within a larger linear code with generator matrix $G = \left[ \begin{array}{c}G_{I}\\G_{O/I}\end{array}\right] \in \mathcal{F}_{q}^{(k+l)\times n}$ with $l = \frac{nR}{\log_{2}q}$. The messages index the $2^{nR}$ cosets of $G_{I}$. A predetermined element in each coset that is $p_{V}-$typical is the assigned codeword for the message and chosen for communication.\footnote{The reader is encouraged to relate to the bounds stated in theorem statement and induced bounds on the rate of communication $\frac{l\log_{2}q}{n}$.} A formal proof we provide below has two parts - error probability analysis for a generic fixed code followed by an upper bound on the latter via code randomization.

\med\textit{Upper bound on Error Prob. for a generic fixed code} : Consider a generic NCC $(n,k,l,g_{I},g_{O/I},b^{n})$ with its range space $v^{n}(a,m) = ag_{I}\oplus_{q}mg_{O/I}\oplus_{q}b^{n}: (a,m) \in \mathcal{V}^{k} \times \mathcal{V}^{l}$. We shall use this and define a CQ-PTP code $(n,\mathcal{I}=\mathcal{F}_{q}^{l},e,\lambda_{\mathcal{I}})$ that is an NCC CQ-PTP. Towards that end, let $\theta(m) \define \sum_{a \in \mathcal{V}^{k}}\mathds{1}_{\left\{ v^{n}(a,m) \in T_{\delta}^{n}(p_{V}) \right\}}$ and 
\begin{eqnarray}
\label{Eqn:NoTypicalElements}
 s(m) \define \begin{cases} \{a \in \mathcal{V}^{K}: v^{n}(a,m) \in T_{\delta}^{n}(p_{V})\}&\mbox{if }\theta(m) \geq 1 \\\{0^{k}\}&\mbox{if }\theta(m) = 0,\end{cases}
 \nonumber
\end{eqnarray}
for each $m \in \mathcal{V}^{l}$. For $m \in \mathcal{V}^{l}$, a predetermined element $a_{m} \in s(m)$ is chosen. On receiving message $m \in \mathcal{V}^{l}$, the encoder prepares the quantum state $\rho_{m}^{\otimes n} \define \rho^{\otimes n}_{v^{n}(a_{m},m)} \define \otimes_{i=1}^{n}\rho_{v_{i}(a_{m},m)}$ and is communicated. The encoding map $e$ is therefore determined via the collection $(\!a_{m}\! \in s(m)\!:\! m \in\! \mathcal{V}^{l})$.

Towards specifying the decoding POVM let $\rho_{v} = \sum_{y \in \mathcal{Y}}p_{Y|V}(y|v)\ket{e_{y|v}}\bra{e_{y|v}}$ be a spectral decomposition for $v \in \mathcal{V}$. We let $p_{VY}\define p_{V}p_{Y|V}$. For any $v^{n}\in \CalV^{n}$, let $\pi_{v^{n}}$ be the conditional typical projector as in \cite[Defn. 15.2.4]{2013Bk_Wil} with respect to the ensemble $\{ \rho_{v}:v \in \mathcal{V}\}$ and distribution $p_{V}$. Similarly, let $\pi_{\rho}$ be the (unconditional) typical projector of the state $\rho\define \sum_{v \in \mathcal{V}}p_{V}(v)\rho_{v}$ as defined in \cite[Defn. 15.1.3]{2013Bk_Wil}.
For $(a,m) \in \CalV^{k}\times \CalV^{l}$, we let $\pi_{a,m} \define \pi_{v^{n}(a,m)}\mathds{1}_{\{ v^{n}(a,m) \in T_{\delta}^{n}(p_{V}) \}}$. We let $\lambda_{\mathcal{I}} \define \{ \sum_{a \in \CalV^{k}}\lambda_{a,m} : m \in \mathcal{I}=\CalV^{l},\lambda_{-1} \}$, where 
\begin{align}
 \label{Eqn:CQPTPPOVMDefn}
 \lambda_{a,m} \!\define\! \Big(\! \sum_{\hata \in \CalV^{k}}\! \sum_{\hatm \in \CalV^{l}} \!\!\gamma_{\hata,\hatm}\Big)^{-{1}/{2}}\!\!\! \gamma_{a,m}\Big(\! \sum_{\tildea \in \CalV^{k}} \sum_{\tildem \in \CalV^{l}}\!\! \gamma_{\tildea,\tildem}\Big)^{-{1}/{2}},
\end{align}
$\lambda_{-1}\define I-\sum_{m \in \CalV^{l}}\sum_{a \in \CalV^{k}}\lambda_{a,m}$ and $\gamma_{a,m} \define \pi_{\rho}\pi_{a,m}\pi_{\rho}$. Since $0 \leq \gamma_{a,m} \leq I$, we have $0 \leq \lambda_{a,m} \leq I$. The latter lower bound implies $\lambda_{\mathcal{I}} \subseteq \mathcal{P}(\mathcal{H})$. The same lower bound coupled with the definition of the generalized inverse implies $I \geq \sum_{a \in \CalV^{k}}\sum_{m \in \CalV^{l}}\lambda_{a,m} \geq 0$. We thus have $0 \leq \lambda_{-1} \leq I$. It can now be verified that $\lambda_{\mathcal{I}}$ is a POVM. In essence, the elements of this POVM is identical to the standard POVMs 
except the POVM elements corresponding to a coset have been added together. Indeed, since each coset corresponds to one message, there is no need to disambiguate within the coset.

We have thus associated an NCC $(n,k,l,g_{I},g_{O/I},b^{n})$ and a collection $(a_{m} \in s(m): m \in \mathcal{V}^{l})$ with a CQ-PTP code. 
The error probability of this code is
\begin{eqnarray}
 \label{Eqn:CQ-PTPErrorProbability}
 q^{-l}\!\!\sum_{m \in \mathcal{I}}\!\mbox{tr}((I-\sum_{a \in \CalV^{k}}\!\!\lambda_{a,m})\rho_{m}^{\otimes n}) \leq  q^{-l}\!\!\sum_{m \in \mathcal{I}}\!\mbox{tr}((I-\lambda_{a_{m},m})\rho_{m}^{\otimes n}).
\end{eqnarray}
Denoting event $\mathscr{E}=\{ \theta(m)< 1 \}$, its complement $\mathscr{E}^{c}$ and the associated indicator functions $\mathds{1}_{\mathscr{E}}, \mathds{1}_{\mathscr{E}^c}$ respectively, a generic term in the RHS of the above sum satisfies
\begin{align}
\mbox{tr}((I-\lambda_{a_{m},m})\rho_{m}^{\otimes n})\mathds{1}_{\mathscr{E}^{c}}
+ \mbox{tr}((I-\lambda_{a_{m},m})\rho_{m}^{\otimes n})\mathds{1}_{\mathscr{E}} \leq \mathds{1}_{\mathscr{E}^{c}}+\sum_{i=1}^{3}T_{2i}, \nonumber
\end{align}
where
\begin{align*}
T_{21} = 2\tr((I-\gamma_{a_{m},m})\rho_{m}^{\otimes n})\mathds{1}_{\mathscr{E}} ,\quad
T_{22}= 4\sum_{\hata \neq a_{m} }\tr(\gamma_{\hata,m}\rho_{m}^{\otimes n})\mathds{1}_{\mathscr{E}},\quad \text{and} \quad T_{23}=4\sum_{\hatm \neq m}\sum_{\tildea  }\tr(\gamma_{\tildea,\hatm}\rho_{m}^{\otimes n})\mathds{1}_{\mathscr{E}},
\end{align*}
where we have used Hayashi-Nagaoka inequality \cite{hayashi2003general}. 

\noindent \textbf{Distribution of the Random Code} : The objects $g_{I}\in \mathcal{V}^{k \times n},g_{O/I} \in \mathcal{V}^{l \times n},b^{n} \in \mathcal{V}^{n}$ and the collection $(a_{m} \in s(m): m \in \mathcal{V}^{l})$ specify an NCC CQ-PTP code unambiguously. A distribution for a random code is therefore specified through a distribution of these objects. We let upper case letters denote the associated random objects, and obtain
\begin{eqnarray}
 \label{Eqn:}
 \CalP\!\left(\begin{array}{c} G_{I}=g_{I},G_{O/I}=g_{O/I}\\B^{n}=b^{n},A_{m}=a_{m}: m \in S(m) \end{array}  \right)= q^{-(k+l+1)n}\prod_{m \in \CalV^{l}}\frac{1}{\Theta(m)},\nonumber
\end{eqnarray}
and analyze the expectation of $\mathscr{E}$ and the terms $T_{2i}; i\in [1,3]$ in regards to the above random code. 
We begin by $\mathbb{E}_{\mathcal{P}}[\mathscr{E}] = \mathcal{P}(\sum_{a \in \CalV^{k}} \mathds{1}_{\{V^{n}(a,m) \in T_{\delta}^{n}(p_{V}) \}} < 1). $ For this, we provide the following proposition.  
\begin{prop}\label{prop:PTP:Lemma for T1}
There exist  $\epsilon_{T_1}(\delta), \delta_{T_1}(\delta),$ such that for  all sufficiently small $\delta$ and sufficiently large $n$, we have $\EE_{\CalP}\left[\mathscr{E}\right] \leq \epsilon_{{T_1}}(\delta) $, if $ \frac{k}{n} \geq \log{q} - H(V) + \delta_S$, where  $\epsilon_{{S}},\delta_{S} \searrow 0$ as $\delta \searrow 0$.
\end{prop}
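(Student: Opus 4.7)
The plan is to apply the second-moment method to the count $\theta(m) = \sum_{a \in \mathcal{V}^k}\mathds{1}_{\{V^n(a,m)\in T_\delta^n(p_V)\}}$ of typical codewords in the random coset indexed by $m$. Since $\mathscr{E} = \{\theta(m) = 0\}$, Chebyshev's inequality gives
\begin{align*}
\EE_\CalP[\mathscr{E}] \;\leq\; \frac{\Var(\theta(m))}{\EE[\theta(m)]^2},
\end{align*}
so the task reduces to showing (i) $\EE[\theta(m)]$ grows exponentially in $n$ under the rate hypothesis, and (ii) $\Var(\theta(m)) \leq \EE[\theta(m)]$ via pairwise independence of the codewords.

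The structural ingredient enabling (ii) is the random bias $B^n$ in Definition~\ref{Defn:NCC}. Because $B^n$ is uniform on $\mathcal{V}^n$ and independent of $(G_I,G_{O/I})$, each $V^n(a,m) = aG_I \oplus_q mG_{O/I} \oplus_q B^n$ is marginally uniform on $\mathcal{V}^n$. For $a_1\neq a_2$, absorbing $C \define mG_{O/I}\oplus_q B^n$ (which is uniform on $\mathcal{V}^n$ and independent of $G_I$) reduces the pair $(V^n(a_1,m),V^n(a_2,m))$ to $(a_1 G_I + C,\, a_2 G_I + C)$, a linear image of $(G_I,C)$ whose kernel has dimension $(k-1)n$: for any $(y_1,y_2)\in\mathcal{V}^{2n}$ one needs $(a_1-a_2)G_I = y_1-y_2$, which has $q^{(k-1)n}$ solutions since $a_1-a_2\neq 0$ makes $G_I\mapsto(a_1-a_2)G_I$ surjective onto $\mathcal{V}^n$, after which $C$ is uniquely determined. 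Thus $(V^n(a_1,m),V^n(a_2,m))$ is uniform on $\mathcal{V}^n\times\mathcal{V}^n$, so the codewords in the coset are pairwise independent.

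With this in hand, the remaining estimates are routine. Marginal uniformity together with the standard volume bound $|T_\delta^n(p_V)| \geq (1-\epsilon_n)\,2^{n(H(V)-\delta'(\delta))}$ yields
\begin{align*}
\EE[\theta(m)] \;=\; q^k\,\frac{|T_\delta^n(p_V)|}{q^n} \;\geq\; (1-\epsilon_n)\, 2^{\,n\left[\frac{k\log_2 q}{n} \,-\, (\log_2 q - H(V) + \delta'(\delta))\right]},
\end{align*}
which grows exponentially once $\frac{k\log_2 q}{n} \geq \log_2 q - H(V) + \delta_S$ with $\delta_S > \delta'(\delta)$; this matches the rate hypothesis (in the form used for Theorem~\ref{Thm:NCCAchievesCQPTP}). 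Pairwise independence then kills all off-diagonal covariances, so
\begin{align*}
\Var(\theta(m)) \;=\; \sum_{a\in\mathcal{V}^k}\Var\bigl(\mathds{1}_{\{V^n(a,m)\in T_\delta^n(p_V)\}}\bigr) \;\leq\; \EE[\theta(m)].
\end{align*}
Substituting back into Chebyshev gives $\EE_\CalP[\mathscr{E}] \leq 1/\EE[\theta(m)]$, which decays exponentially; setting $\epsilon_{T_1}(\delta)$ to this bound and $\delta_{T_1}(\delta) = \delta_S$ completes the proof.

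The only nontrivial step is the pairwise-independence verification above, which is precisely the reason for including the random bias $B^n$ in the code construction; the remaining ingredients are standard typical-set estimates and a textbook Chebyshev argument. The quantum content of the paper plays no role here because $\mathscr{E}$ is a purely classical event about the random classical codewords themselves.
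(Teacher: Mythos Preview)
Your proof is correct. The paper's own proof is just a one-line citation to an external appendix (``The proof follows from Appendix B of \cite{MACwithStates_ArunPad_SandeepPra} with the identification of $\mathcal{S}=\phi$''), so there is no in-paper argument to compare against; your second-moment/Chebyshev argument via pairwise independence of the random-coset codewords is exactly the standard route used in that line of work, and it yields the same exponential bound the paper later quotes in~(\ref{Eqn:Msg-SumThmStep1-T1Bnd1}).
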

\begin{proof}
The proof follows from Appendix B of  \cite{MACwithStates_ArunPad_SandeepPra} with the identification of $\mathcal{S} = \phi $.
\end{proof}
We now consider $T_{21}$. Deriving an upper bound on $T_{21}$ is by deriving a lower bound $\tr(\lambda_{a_{m},m}\rho_{m}^{\otimes n})$. This follows by an argument that is colloquially referred to as `pinching'. Lemma \ref{Lem:CharHighProbSubAKAPinching} in Appendix \ref{AppSec:CharHighProbSubAKAPinching} proves the existence of $\lambda>0$ such that $\mathbb{E}_{\mathcal{P}}\{ T_{21}\} \leq \exp\{-n\lambda\delta^{2} \}$ for sufficiently large $n$.
\noindent We now analyze $\mathbb{E}_{\mathcal{P}}[T_{22}]$.
Denoting the event\begin{eqnarray}\label{def:JandK}
\mathcal{J}\define \left\{ \begin{array}{c}\Theta(m)\geq 1 ,\!V^{n}(\hat{a},\hatm)=\hatx^{n}\\A_{m}=d,V^{n}(d,m)=x^{n} \end{array} \right\}\subseteq\! \mathcal{K}\define \left\{  \begin{array}{c} V^{n}(\hat{a},\hatm)=\hatx^{n}\\V^{n}(d,m)=x^{n} \end{array}\right\}
\end{eqnarray} we perform the following steps.
\begin{align}
 \mathbb{E}_{\mathcal{P}}[T_{22}] = \sum_{\hata \in \CalV^{k}  }\mathbb{E}_{\mathcal{P}}[\mbox{tr}(\Gamma_{\hata,m}\rho_{m}^{\otimes n})\mathds{1}_{\{ \theta(m) \geq 1 \}}\mathds{1}_{\{\hata \neq A_{m}\}}] 
 &= \sum_{d \in \CalV^{k}}\sum_{\hata \in \CalV^{k} }\sum_{x^{n} \in T_{\delta}^{n}(p_{V})}\sum_{\hatx^{n} \in \CalV^{n}}\mathbb{E}\left[\mbox{tr}(\Gamma_{\hata,m}\rho_{m}^{\otimes n})\11_{\hata \neq d}\11_{\CalJ}\right]
 \nonumber\\
 &= \sum_{d \in \CalV^{k}}\sum_{\hata \neq d }\sum_{x^{n} \in T_{\delta}^{n}(p_{V})}\sum_{\hatx^{n} \in \CalV^{n}}\mathbb{E}\left[\mbox{tr}(\Gamma_{\hata,m}\rho_{m}^{\otimes n})\11_{\CalJ}\right] \nonumber
\end{align}
where the restriction of the summation $x^{n}$ to $T_{\delta}^{n}(p_{V})$ is valid since $S(m) > 1$ forces the choice $A_{m} \in S(m)$ such that $V^{n}(A_{m},m) \in T_{\delta}^{n}(p_{V})$. Going further, we have $ $
\begin{align}\label{eq:PTP_T22}
 \mathbb{E}_{\mathcal{P}}[T_{22}]& = \sum_{\substack{d,\hata \in \CalV^{k}\\ \hata \neq d}}\sum_{x^{n} \in T_{\delta}^{n}(p_{V})}\sum_{\hatx^{n} \in T_{\delta}^{n}(p_{V})}\!\!\!\!\!\!\mathbb{E}\left[\mbox{tr}(\pi_{\rho}\pi_{\hatx^{n}}\pi_{\rho}\rho_{x^{n}}^{\otimes n})\11_{\CalJ}\right]\nonumber \\
 &= \sum_{\substack{d,\hata: \hata \neq d}}\sum_{x^{n} \in T_{\delta}^{n}(p_{V})}\sum_{\hatx^{n} \in T_{\delta}^{n}(p_{V})}\mbox{tr}(\pi_{\rho}\pi_{\hatx^{n}}\pi_{\rho}\rho_{x^{n}}^{\otimes n})\mathcal{P}(\CalJ)\nonumber\\
 &\stackrel{(a)}{\leq}\sum_{\substack{d,\hata:\hata \neq d}}\sum_{\hat{x}^n  \in T_{\delta}^{n}(p_{V})}\mbox{tr}(\pi_{\hatx^{n}}\pi_{\rho} )\mathcal{P}(\CalJ)2^{-n\left[S(\rho)-H(p_{V})+\epsilon_V\right]} \nonumber\\
&\stackrel{(b)}{\leq} \sum_{\substack{d,\hata:\hata \neq d}}\sum_{{\hatx^{n} \in T_{\delta}^{n}(p_{V})}}\mbox{tr}(\pi_{\hatx^{n}}\pi_{\rho})\mathcal{P}(\mathcal{K})2^{-n\left[S(\rho)-H(p_{V})+\epsilon_V\right]} \nonumber\\
 &\stackrel{(c)}{=} \sum_{\substack{d,\hata:\hata \neq d}}\sum_{{\hatx^{n} \in T_{\delta}^{n}(p_{V})}}\mbox{tr}(\pi_{\hatx^{n}}\pi_{\rho})\frac{1}{q^{2n}}2^{-n\left[S(\rho)-H(p_{V})+\epsilon_V\right]}\nonumber\\
 &\stackrel{(d)}{\leq} 2^{-n\left[ \chi(\{p_{V};\rho_{v}\})+ \epsilon_V  -2H(p_{V}) -\frac{2k}{n}\log{q} + 2\log{q} \right]}, 
\end{align}
where the restriction of the summation $\hatx^{n}$ to $T_{\delta}^{n}(p_{V})$ follows from the fact that $\pi_{\hat{x}^{n}}$ is the zero projector if $\hatx^{n} \notin T_{\delta}^{n}(p_{V})$, (a) follows from the operator inequality $ \sum_{x^n \in T_{\delta}(p_{V})}\pi_{\rho}\rho_{x^{n}}\pi_{\rho} \leq 2^{n(H(p_{V})+\epsilon_V(\delta)) }\pi_{\rho}\rho^{\otimes n}\pi_{\rho} \leq 2^{n(H(p_{V})+\epsilon_V(\delta)-S(\rho)) }\pi_{\rho}$ found in \cite[Eqn. 20.34, 15.20]{2017BkOnline_Wil}, (b) follows from Def. \ref{def:JandK}, (c) follows from pairwise independence of the distinct codewords, and (d) follows from $\pi_{\rho} \leq I $ and \cite[Eqn. 15.77]{2017BkOnline_Wil} and $\epsilon_V(\delta) \searrow 0$ as $\delta \searrow 0$.
 We now derive an upper bound on $\mathbb{E}_{\mathcal{P}}[T_{23}]$. 
We have
\begin{align}
\mathbb{E}_{\mathcal{P}}[T_{23}]& {=} \sum_{d,\hata \in \CalV^{k}}\sum_{\hatm \neq m }\sum_{\substack{x^{n},\hatx^{n} \in \\T_{\delta}^{n}(p_{V})}}\!\!\!\mathbb{E}\!\left[\mbox{tr}(\pi_{\rho}\Pi_{\hata,\hatm}\pi_{\rho}\rho_{A_{m},m}^{\otimes n})\mathds{1}_{\mathcal{J}}\right] \nonumber\\
&{=} \sum_{d,\hata \in \CalV^{k}}\sum_{\hatm \neq m }\sum_{\substack{x^{n},\hatx^{n} \in T_{\delta}^{n}(p_{V})}}\mbox{tr}(\pi_{\hatx^{n}}\pi_{\rho}\rho_{x^{n}}^{\otimes n}\pi_{\rho})\mathcal{P}(\mathcal{J})\nonumber\\
 &{\leq} \sum_{d,\hata \in \CalV^{k}}\sum_{\hatm \neq m }\sum_{\substack{\hatx^{n} \in T_{\delta}^{n}(p_{V})}}\!\!\!\!\mbox{tr}(\pi_{\hatx^{n}}\pi_{\rho})\mathcal{P}(\mathcal{J})2^{-n\left[S(\rho)-H(p_{V})+\epsilon_V\right]}\nonumber\\
 &{\leq}  \sum_{d,\hata \in \CalV^{k}}\sum_{\hatm \neq m }\sum_{\substack{\hatx^{n} \in T_{\delta}^{n}(p_{V})}}\!\!\!\!\mbox{tr}(\pi_{\hatx^{n}}\pi_{\rho})\mathcal{P}(\mathcal{K})2^{-n\left[S(\rho)-H(p_{V})+\epsilon_V\right]}\nonumber\\
&{=} \sum_{d,\hata \in \CalV^{k}}\sum_{\hatm \neq m }\sum_{\substack{\hatx^{n} \in T_{\delta}^{n}(p_{V})}}\!\!\!\!\mbox{tr}(\pi_{\hatx^{n}}\pi_{\rho})\frac{1}{q^{2n}}2^{-n\left[S(\rho)-H(p_{V})+\epsilon_V\right]}\nonumber\\
 &{\leq} \;\; 2^{-n\left[ \chi(\{p_{V};\rho_{v}\})+ 2\log_{2}q-2H(p_{V}) -\frac{2k+l}{n}\log_{2}q +\epsilon_V \right]}, \nonumber
\end{align}
where the inequalities above uses similar reasoning as in \eqref{eq:PTP_T22}.

We have therefore obtained three bounds 
$\frac{k}{n} > 1-\frac{H(p_{V})}{\log_{2} q} $, $\frac{2k}{n} < 2+\frac{\chi(\{p_{V};\rho_{v}\})-2H(p_{V})}{\log_{2} q} $, $\frac{2k+l}{n} < 2+\frac{\chi(\{p_{V};\rho_{v}\})-2H(p_{V})}{\log_{2} q} $. A rate of $\chi(\{p_{V};\rho_{v}\})-\epsilon$ is achievable by choosing $\frac{k}{n} = 1-\frac{H(p_{V})}{\log_{2} q}+\frac{\epsilon}{2}$, $\frac{l}{n} = \frac{\chi(\{p_{V};\rho_{v}\})-\epsilon\log_2\sqrt{q}}{\log_{2} q}$ thus completing the proof.
\end{proof}
\section{Decoding Sum over CQ-MAC}
\label{sec:theorem2}
Throughout this section, the source alphabets $\mathcal{S}\define \mathcal{S}_{1}=\mathcal{S}_{2}=\mathcal{F}_{q}$ is a finite field with $q$ elements and the receiver intends to reconstruct the sum $f(S_{1},S_{2})=S_{1}\oplus_{q}S_{2}$ of the sources. As discussed in Sec.~\ref{Sec:CentralIdea}, we propose a `separation based' coding scheme consisting of a K\"orner Marton (KM) source code followed by a CQ MAC channel code designed to communicate the sum of the message indices input at the channel code encoders. The focus of this section is to design, analyze and thereby characterize performance of the latter CQ MAC channel code tasked to communicate the sum of messages. Towards that end, we begin with a definition.

\begin{definition}
 \label{Defn:CQMACChnlCode-MsgSumCap}
 Let $\mathcal{V}=\mathcal{F}_{q}$ be a finite field and $(\rho_{x_{1}x_{2}} \in \mathcal{D}(\mathcal{H}_{Y}): (x_{1},x_{2}) \in \mathcal{X}_{1}\times \mathcal{X}_{2})$ be a CQ-MAC. A CQ-MAC code $c_{m\oplus} = (n,\mathcal{I}_{1}=\mathcal{I}_{2}=\mathcal{F}_{q}^{l},e_{1},e_{2},\lambda_{[q^{l}]})$ of block-length $n$ for recovering $\mathcal{F}_{q}-$sum of messages consists of two encoders maps $e_{j} : \mathcal{V}^{l} \rightarrow \mathcal{X}_{j}^{n} : j \in [2]$, and a POVM $\lambda_{q^{l}}= \{ \lambda_{m} \in \mathcal{P}(\mathcal{H}_{Y}^{\otimes n}) : m \in \mathcal{V}^{l}\}$.
 
 An $\mathcal{F}_{q}-$\textit{message-sum} rate $R>0$ is achievable if given any sequence $l(n) \in \naturals : n \in \naturals$ such that $\limsup_{n \rightarrow \infty} \frac{l(n)\log q}{n}< R$, any sequence $p_{M_{1}M_{2}}^{(n)}$ of PMFs on $\mathcal{F}_{q}^{l(n)} \times \mathcal{F}_{q}^{l(n)}$, there exists a CQ-MAC code $c_{m\oplus}^{(n)} = (n,\mathcal{I}=\mathcal{I}_{1}=\mathcal{I}_{2}=\mathcal{F}_{q}^{l(n)},e_{1}^{(n)},e_{2}^{(n)},\lambda_{\mathcal{I}})$ of block-length $n$ for recovering $\mathcal{F}_{q}^{l(n)}-$sum of messages such that for every $\delta > 0$, have
\begin{eqnarray}
\label{Eqn:ProbStatementGeneralfrrorOfCode}
\lim_{n\rightarrow \infty}\overline{\xi}(c_{m\oplus}^{(n)}) &=& \lim_{n\rightarrow \infty}1-\!\!\!\! \sum_{\substack{(m_{1},m_{2}) \\\in \mathcal{I}_{1}\times \mathcal{I}_{2}}}\!\!\!\! p_{M_{1}}p_{M_{2}}(m_{1},m_{2})\tr(\lambda_{m_{1}\oplus m_{2}}\rho^{\otimes n}_{c,\ulinem}) =0
 \nonumber
\end{eqnarray}
where $\rho^{\otimes n}_{c,\ulinem} \define \otimes_{i=1}^{n}\rho_{x_{1i}(m_{1})x_{2i}(m_{2})}$, where $e_{j}(m_{j}) = x_{j1}(m_{j}),x_{j2}(m_{j}),\cdots, x_{jn}(m_{j})$ for $j\in [2]$. The closure of the set of all achievable $\mathcal{F}_{q}-$message-sum rates is the message-sum capacity of the CQ-MAC.
\end{definition}

From our discussion in Sec.~\ref{Sec:CentralIdea} and the above definition, a road map for characterizing sufficient conditions for computing the sum over a CQ-MAC must be evident. Referring back to Sec.~\ref{Sec:CentralIdea}, we note that is joint PMF $\mathbb{W}_{S_{1}S_{2}}$ of the sources is such that $H(S_{1}\oplus_{q}S_{2})$ is dominated by the message-sum capacity of the CQ-MAC, then the corresponding sum of sources can be reconstructed over the CQ-MAC. Therefore, if $R > 0$ is an achievable message-sum rate over a CQ-MAC, then $H(S_{1}\oplus_{q}S_{2}) < R$ is a sufficient condition. We now state the main contribution of this section - a lower bound on the message-sum capacity of a CQ-MAC. Following its proof, we leverage the above argument in Thm.~\ref{thm:SumCQ_MAC} to characterize sufficient conditions for reconstructing sum of sources over an arbitrary CQ-MAC.

\begin{definition}
 \label{Defn:AdmPMFsForMessageSumRate}
Given a CQ-MAC $\rho_{\ulineCalX} \define (\rho_{x_{1}x_{2}} \in \mathcal{D}(\mathcal{H}_{Y}): (x_{1},x_{2}) \in \mathcal{X}_{1}\times \mathcal{X}_{2})$ and a prime power $q$, let \begin{eqnarray}
\mathscr{P}(\rho_{\ulineCalX},q) \define \left\{\begin{array}{l}(p_{V_{1}V_{2}U},\rho_{u}:u \in \mathcal{V})\end{array}: \begin{array}{l}p_{V_{1}X_{1}}p_{V_{2}X_{2}} \mbox{ is a PMF on }
\mathcal{V}\times \mathcal{X}_{1} \times \mathcal{V}\times \mathcal{X}_{2}, \mathcal{V}=\mathcal{F}_{q},\\
\displaystyle p_{V_{1}V_{2}U}(v_{1},v_{2},u) =\sum_{x_{1},x_{2} \in \ulineCalX} p_{V_{1}X_{1}}(v_{1},x_{1})p_{V_{2}X_{2}}(v_{2},x_{2})\mathds{1}_{\{ u = v_{1}\oplus_{q} v_{2}\}}\\\displaystyle
\rho_{u} \deq \sum_{v_1\in\CalF_q}\sum_{v_2\in\CalF_q}\!\!{p_{V_1V_2|U}(v_1,v_2|u)}\rho_{v_1v_2}\11_{\{v_1 \oplus_q v_2 = u\}},\\ \displaystyle
    \rho_{v_1v_2} \deq \sum_{x_1\in \CalX_1,x_2\in\CalX_2} p_{X_1|V_1}(x_1|v_1)p_{X_2|V_2}(x_2|v_2)\rho_{x_1x_2}
\end{array}\right\}
\nonumber
\end{eqnarray}
For $(p_{V_{1}V_{2}U},\rho_{u}:u \in \mathcal{F}_{q}) \in \mathscr{P}(\rho_{\ulineCalX},q)$, let \begin{eqnarray}\mathscr{R}(p_{V_{1}V_{2}U},\rho_{\mathcal{V}}) \define \min\{ H(V_1), H(V_2) \} -  H(U) +\chi(\{p_U;\rho_{u}\})\mbox{ and }\mathscr{R}(\rho_{\ulineCalX},q) \define \displaystyle \sup_{p_{V_{1}V_{2}U},\rho_{\mathcal{V}} \in \mathscr{P}(\rho_{\ulineCalX},q)}\mathscr{R}(p_{V_{1}V_{2}U},\rho_{\mathcal{V}}). \end{eqnarray}
\end{definition}

\begin{lemma}
\label{Lem:DecSumOfMsgOnCQ-MAC}
 $\mathcal{F}_{q}-$message-sum rate $\mathscr{R}(\rho_{\ulineCalX},q)$ is achievable over a CQ-MAC $\rho_{\ulineCalX}= (\rho_{x_{1}x_{2}} \in \mathcal{D}(\mathcal{H}_{Y}): (x_{1},x_{2}) \in \mathcal{X}_{1}\times \mathcal{X}_{2})$. 
\end{lemma}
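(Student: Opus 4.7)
The plan is to generalize the construction of Theorem~\ref{Thm:NCCAchievesCQPTP} by equipping both encoders with nested coset codes (NCCs) built from a \emph{common} pair of generator matrices, so that the sum of their transmitted codewords is itself a codeword of the same NCC, indexed by $(a_{1}\oplus_{q}a_{2},m_{1}\oplus_{q}m_{2})$. This algebraic closure reduces the decoding task to an NCC CQ-PTP problem over the induced effective channel $\{\rho_{u}:u\in\mathcal{F}_{q}\}$ of Definition~\ref{Defn:AdmPMFsForMessageSumRate}.

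\textit{Step 1 (Construction).} Fix $(p_{V_{1}V_{2}U},\rho_{u}:u\in\mathcal{F}_{q})\in\mathscr{P}(\rho_{\ulineCalX},q)$ with conditional channels $p_{X_{j}|V_{j}}$ realising the marginals. Draw a single random pair $(G_{I},G_{O/I})\in\mathcal{F}_{q}^{k\times n}\times\mathcal{F}_{q}^{l\times n}$ and two independent bias vectors $B_{1}^{n},B_{2}^{n}\in\mathcal{F}_{q}^{n}$, all entries IID uniform. Encoder $j$ uses the NCC with codewords $V_{j}^{n}(a_{j},m_{j})\define a_{j}G_{I}\oplus_{q}m_{j}G_{O/I}\oplus_{q}B_{j}^{n}$, selects a predetermined index $A_{j,m_{j}}$ rendering $V_{j}^{n}(A_{j,m_{j}},m_{j})\in T_{\delta}^{n}(p_{V_{j}})$ whenever possible, and transmits $X_{j}^{n}$ sampled from $p_{X_{j}|V_{j}}^{\otimes n}(\cdot\,|V_{j}^{n}(A_{j,m_{j}},m_{j}))$. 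By linearity the sum codeword equals $(a_{1}\oplus_{q}a_{2})G_{I}\oplus_{q}(m_{1}\oplus_{q}m_{2})G_{O/I}\oplus_{q}(B_{1}^{n}\oplus_{q}B_{2}^{n})$, so on the covering event its selected value $U^{n}=V_{1}^{n}(A_{1,m_{1}},m_{1})\oplus_{q}V_{2}^{n}(A_{2,m_{2}},m_{2})$ is $p_{U}$-typical, and the channel output averaged over $p_{X_{j}|V_{j}}$ is $\otimes_{i=1}^{n}\rho_{U_{i}}$.

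\textit{Step 2 (Decoder and error bound).} Apply the POVM construction~\eqref{Eqn:CQPTPPOVMDefn} to the effective PTP ensemble $\{p_{U}(u),\rho_{u}\}$: use the conditionally typical projectors $\pi_{u^{n}}$ and the typical projector of $\sigma\define\sum_{u}p_{U}(u)\rho_{u}$, and assign the decoded label $\hat{m}$ by collapsing the $a$-indices, i.e., $\lambda_{\hat{m}}\define\sum_{\hat{a}}\lambda_{\hat{a},\hat{m}}$, interpreted as the declared value of $m_{1}\oplus_{q}m_{2}$. Hayashi--Nagaoka then yields the usual bound, now with two covering events $\mathscr{E}_{j}=\{\theta_{j}(m_{j})\geq 1\}$ (one per encoder), a pinching term $T_{21}$ for the true sum codeword, and a packing term combining $T_{22},T_{23}$ over distinct sum indices $(\hat{a},\hat{m})\neq(A_{1,m_{1}}\oplus_{q}A_{2,m_{2}},m_{1}\oplus_{q}m_{2})$. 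The covering terms vanish via Proposition~\ref{prop:PTP:Lemma for T1} applied separately to each encoder, requiring $\tfrac{k\log_{2}q}{n}>\log_{2}q-H(V_{j})$ for $j=1,2$, hence $\tfrac{k\log_{2}q}{n}>\log_{2}q-\min\{H(V_{1}),H(V_{2})\}$; the pinching bound of Lemma~\ref{Lem:CharHighProbSubAKAPinching} adapts unchanged to the state $\rho_{u}$; the packing bound, mirroring~\eqref{eq:PTP_T22}, requires $\tfrac{(k+l)\log_{2}q}{n}<\log_{2}q-H(U)+\chi(\{p_{U},\rho_{u}\})$. Eliminating $k$ gives $\tfrac{l\log_{2}q}{n}<\min\{H(V_{1}),H(V_{2})\}-H(U)+\chi(\{p_{U},\rho_{u}\})=\mathscr{R}(p_{V_{1}V_{2}U},\rho_{\mathcal{V}})$. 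Taking the supremum over $\mathscr{P}(\rho_{\ulineCalX},q)$ and derandomizing yields achievability of $\mathscr{R}(\rho_{\ulineCalX},q)$.

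\textit{Main obstacle.} The hardest step will be the quantum packing analysis in this new regime: the ``sum codewords'' are not drawn IID but arise from the sum of two algebraically coupled NCCs, and the encoder-selected indices $A_{j,m_{j}}$ measurably depend on the random generator matrices. Making the chain~\eqref{eq:PTP_T22} go through requires verifying that, after marginalizing $(B_{1}^{n},B_{2}^{n})$ and conditioning on $\mathscr{E}_{1}\cap\mathscr{E}_{2}$, any two distinct sum indices correspond to codewords that are marginally uniform and pairwise independent on $\mathcal{F}_{q}^{n}$, so that the operator-inequality/typical-projector estimates go through verbatim. Equally delicate is the averaging argument that converts the transmitted $\rho_{x_{1}^{n}x_{2}^{n}}^{\otimes n}$ into the effective $\otimes_{i=1}^{n}\rho_{U_{i}}$ through the conditional channels $p_{X_{j}|V_{j}}$: this is precisely where the specific definition of $\rho_{u}$ in Definition~\ref{Defn:AdmPMFsForMessageSumRate} (and the independence $p_{V_{1}X_{1}}p_{V_{2}X_{2}}$) is indispensable, since only after this averaging does the problem look like a PTP with ensemble $\{p_{U}(u),\rho_{u}\}$ to which Theorem~\ref{Thm:NCCAchievesCQPTP} can be invoked.
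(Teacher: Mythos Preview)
Your construction, encoding, decoding POVM, and Hayashi--Nagaoka error decomposition match the paper's proof exactly: shared $(G_I,G_{O/I})$, independent biases $B_j^n$, selection of $p_{V_j}$-typical representatives $A_{j,m_j}$, conditional input generation via $p_{X_j|V_j}$, and a sum-NCC decoder built from $\{\pi_{u^n},\pi_\rho\}$ with respect to the effective ensemble $\{p_U,\rho_u\}$.

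The one place your sketch departs from the paper is the packing step. The sentence ``the channel output averaged over $p_{X_j|V_j}$ is $\otimes_{i}\rho_{U_i}$'' is not correct: that average gives $\rho_{v_1^nv_2^n}^{\otimes n}$, which depends on the \emph{pair} $(v_1^n,v_2^n)$ and not only on $u^n=v_1^n\oplus_q v_2^n$, so the problem does not reduce to the PTP $\{p_U,\rho_u\}$ and the clean bound $\tfrac{(k+l)\log q}{n}<\log q-H(U)+\chi$ is not what falls out. The paper instead carries $(v_1^n,v_2^n)$ through the packing calculation: it expands over $(a_1,a_2)$, shows that for any competing $(\hat a,\hat m)\neq(a_1\oplus a_2,m_1\oplus m_2)$ the triple $(V_1^n(a_1,m_1),V_2^n(a_2,m_2),U^n(\hat a,\hat m))$ is jointly uniform on $\mathcal{F}_q^{3n}$, applies the operator inequality $\sum_{(v_1^n,v_2^n)\in T_\delta}\pi_\rho\rho_{v_1^nv_2^n}\pi_\rho\le 2^{n(H(V_1,V_2)-S(\rho))}\pi_\rho$, and obtains bounds of the form $\tfrac{3k\log q}{n}$ and $\tfrac{(3k+l)\log q}{n}<3\log q-H(V_1,V_2)-H(U)+\chi(\{p_U;\rho_u\})$. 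You correctly flagged the $\rho_{x_1^nx_2^n}\to\rho_u$ conversion as the main obstacle; the resolution the paper uses is not to average down to $\rho_u$ but to keep $\rho_{v_1^nv_2^n}$ and exploit the triple-wise independence and the extra $2\log q-H(V_1,V_2)$ of slack it provides.
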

\begin{proof}
 Let $(p_{V_{1}V_{2}U},\rho_{u}:u \in \mathcal{V}) \in \mathscr{P}(\rho_{\ulineCalX},q)$ with associated collection $(\rho_{v_{1}v_{2}}: (v_{1},v_{2}) \in \mathcal{V}_{1}\times \mathcal{V}_{2})$ of density operators and PMF $p_{V_{1}X_{1}}p_{V_{2}X_{2}}$ on $\mathcal{V}_{1}\times \mathcal{X}_{1} \times \mathcal{V}_{2}\times \mathcal{X}_{2}$ where $\mathcal{V}_{1}=\mathcal{V}_{2}=\mathcal{F}_{q}$.
 
 We now describe the coding scheme in terms of a specific code. It is instructive to revisit Sec.~\ref{Sec:CentralIdea}, wherein we specified the import of both encoders employing cosets of the the \textit{same} linear code. In order to choose codewords of a desired empirical distribution $p_{V_{j}}$, we employ NCCs (as was done for the same reason in Sec.~\ref{Sec:NCCAchieveCQ-PTPCapacity}). Following the same notation as in proof of Thm.~\ref{Thm:NCCAchievesCQPTP}, we now specify the random coding scheme.
 
 Let $G_{I} \in \fieldq^{k \times n}, G_{O/I} \in \fieldq^{l \times n}, B_{j} \in \fieldq^{n} : j \in [2]$ be mutually independent and uniformly distributed on their respective range spaces. Through out this proof, we let $\oplus = \oplus_{q}$. Let $V_{j}^{n}(a,m_{j}) \define a {G_{I}} \oplus m_{j}G_{O/I}\oplus B_{j}^{n} : (a,m_{j})\in \fieldq^{k+l}$ for $j \in [2]$ and $U^{n}(a,m) \define a {G_{I}} \oplus m G_{O/I}\oplus B_{1}^{n}\oplus B_{2}^{n} : (a,m)\in \fieldq^{k+l}$. For $j \in [2]$, let
 \begin{eqnarray}
\label{Eqn:NoTypicalElements}
 S_{j}(m_{j}) \define \begin{cases} \{a \in \mathcal{V}^{k}: V^{n}_{j}(a,m_{j}) \in T_{\delta}^{n}(p_{V_{j}})\}&\mbox{if }\displaystyle \sum_{a \in \mathcal{V}^{k}}\mathds{1}_{\left\{ V_{j}^{n}(a,m_{j}) \in T_{\delta}^{n}(p_{V_{j}}) \right\}} \geq 1 \\\{0^{k}\}&\mbox{otherwise, i.e }\displaystyle \sum_{a \in \mathcal{V}^{k}}\mathds{1}_{\left\{ V_{j}^{n}(a,m_{j}) \in T_{\delta}^{n}(p_{V_{j}}) \right\}} = 0,\end{cases}
 \nonumber
\end{eqnarray}
for each $m_{j} \in \mathcal{V}^{l}$. For $m_{j} \in \mathcal{V}^{l}$, a predetermined element $A_{j,m_{j}} \in S_{j}(m_{j})$ is chosen. We let $\Theta_{j}(m_{j}) \define |S_{j}(m_{j})|$. For $m_{j} \in \mathcal{V}^{l}$, a predetermined $X_{j}^{n}(m_{j}) \in \mathcal{X}_{j}^{n}$ is chosen. As we shall see later, the choice of $X_{j}^{n}(m_{j})$ is based on $V_{j}^{n}(A_{j,m_{j}},m_{j})$. We are thus led to the encoding rule.
\med
\textit{Encoding Rule}: On receiving message $(m_{1},m_{2}) \in \mathcal{V}^{l} \times \mathcal{V}^{l}$, the quantum state $\rho_{m_{1}m_{2}} \define \rho_{X_{1}^{n} (m_{1})X_{2}^{n} (m_{2})} \define \otimes_{t=1}^{n}\rho_{X_{1t}(m_{1})X_{2t}(m_{2})}$ is (distributively) prepared.

\med\textit{Distribution of the Random Code}: The distribution of the random code is completely specified through the distribution $\mathcal{P}(\cdot)$ of $G_{I},G_{O/I}, B_{1}^{n},B_{2}^{n}, (A_{1,m_{1}}: m_{1} \in \CalV^{l}), (A_{2,m_{2}}: m_{2} \in \CalV^{l})$ and $(X_{j}^{n}(m_{j}) : m_{j} \in \mathcal{V}^{l})$. We let
\begin{eqnarray}
  \mathcal{P}\left( \!\!\!
  \begin{array}{c} (A_{1,m_{1}}=a_{1,m_{1}}: m_{1} \in \mathcal{V}^{l}), (A_{2,m_{2}}=a_{2,m_{2}}: m_{2} \in \mathcal{V}^{l}),\\B_{j}^{n} = b_{j}^{n} : j \in [2],
(X_{1}(m_{1}) = x_{1}^{n}(m_{1}): m_{1} \in \CalV^{l} ),\\ G_{I}=g_{I},G_{O/I}=g_{O/I},(X_{2}(m_{2}) = x_{2}^{n}(m_{2}): m_{2} \in \CalV^{l} ) \end{array} \!\!\!\right)= \displaystyle \left[\prod_{m_{1}}\frac{\mathds{1}_{\{a_{1,m_{1}} \in s_{1}(m_{1}) \}}}{\Theta(m_{1})}p_{X_{1}|V_{1}}^{n}(x_{1}^{n}(m_{1})|v_{1}^{n}(a_{1,m_{1}},m_{1}))\right]\times \nonumber\\
\label{Eqn:DistOfRandomCode}
\left[\prod_{m_{2}}\frac{\mathds{1}_{\{a_{2,m_{2}} \in s_{2}(m_{2})\} }}{\Theta(m_{2})}p_{X_{2}|V_{2}}^{n}(x_{2}^{n}(m_{2})|v_{2}^{n}(a_{2,m_{2}},m_{2})\right]
\times \frac{1}{q^{kn+ln+2n}}. \nonumber
\end{eqnarray}
Towards specifying a decoding POVM, we state the associated density operators modeling the quantum systems, their spectral decompositions and projectors. Let
\begin{eqnarray}
 \rho \deq 
 \sum_{y \in \mathcal{Y}}s_{Y}(y)\ket{h_{y}}\bra{h_{y}},~~ \rho_{x_{1}x_{2}}\deq \sum_{y \in \mathcal{Y}}p_{Y|X_{1}X_{2}}(y|x_{1},x_{2})\ket{e_{y|x_{1}x_{2}}}\bra{e_{y|x_{1}x_{2}}} : (x_{1},x_{2}) \in \ulineCalX \nonumber\\
 \rho_{v_{1}v_{2}} \deq  
 \sum_{y \in \mathcal{Y}}q_{Y|V_{1}V_{2}}(y|v_{1},v_{2})\ket{f_{y|v_{1}v_{2}}}\bra{f_{y|v_{1}v_{2}}}: (v_{1},v_{2}) \in \ulineCalV ,~~
 \rho_{u} \deq \sum_{y \in \mathcal{Y}}r_{Y|U}(y|u)\ket{g_{y|u}}\bra{g_{y|u}} : u \in \mathcal{U} \nonumber,
\end{eqnarray}

\med\textit{Decoding POVM}: Unlike a generic CQ-MAC decoder \cite{winter2001capacity}, which aims at decoding both the classical messages from the quantum state received, the decoder here is designed to decode only the sum of messages transmitted.
For this, the decoder employs the nested coset code $(n, k,l,G_I,G_{O/I},B^n)$, where $B^n = B_1^n \oplus B_2^n$. 
We define $U^n(a,m) \deq aG_I + mG_{O/I} + B^n$ to represent a generic codeword. We let $\Pi_{a,m} \deq \pi_{U^n(a,m)}\11_{\{U^n(a,m) \in \TDelta(p_U)\}}$, where $p_U$ is as defined in the theorem statement. The decoder is provided with a sub-POVM $\Lambda_{\CalI} \deq \{\Lambda_{m}\define\sum_{a \in \mathcal{F}_{q}^{k}}\Lambda_{a,m} : m \in \mathcal{F}_{q}^{l}\}$  where
\begin{align}
\Lambda_{a,m} &\deq \Big(\sum_{\hat{a} \in \CalF_q^k}\sum_{\hat{m}\in \CalF_q^l}\Gamma_{\hat{a},\hat{m}}\Big)^{-1/2}\Gamma_{a,m}    \Big(\sum_{\hat{a} \in \CalF_q^k}\sum_{\hat{m}\in \CalF_q^l}\Gamma_{\hat{a},\hat{m}}\Big)^{-1/2}, \nonumber
\end{align}
$\Lambda_{-1} \deq I - \sum_{{a} \in \CalF_q^k}\sum_{{m}\in \CalF_q^l} \Lambda_{a,m}$ and $\Gamma_{a,m} \deq   \pi_{\rho}\Pi_{(a,m)}\pi_{\rho}$. We note that
\begin{eqnarray} 
\displaystyle \pi_{\rho} \define \sum_{y^{n} \in T_{\delta}^{n}(s_{Y})}\displaystyle\bigotimes_{t=1}^{n}\ket{h_{y_{t}}}\bra{h_{y_{t}}}\mbox{ and }\pi_{u^n} \define \sum_{y^{n} : (u^{n},y^{n}) \in T_{\delta}^{n}(p_{U}r_{Y|U})}\displaystyle\bigotimes_{t=1}^{n}\ket{g_{y_{t}|u_{t}}}\bra{g_{y_{t}|u_{t}}}, \nonumber
\end{eqnarray}
denote the typical and conditional typical projectors (as stated in Definition 15.2.4 \cite{2013Bk_Wil}) with respect to $\rho \deq \sum_{u\in \CalF_q}p_U(u)\rho_u$ and $(\rho_u:u \in \mathcal{U})$, respectively. 

\med \textit{Error Analysis:}
We derive upper bounds on $\mathbb{E}_{\mathcal{P}}\{\overline{\xi}(c_{m\oplus})\}$. Our derivation will be similar to those adopted in proof of Thm.~\ref{Thm:NCCAchievesCQPTP}. Let us define event
\begin{align}
\label{Eqn:NotTypErrEvent}
 \mathscr{E} \define \left\{ \left(\!\!\! \begin{array}{c} V_{1}^{n}(A_{1.m_{1}},m_{1}), X_{1}^{n}(m_{1}),\\V_{2}^{n}(A_{2.m_{2}},m_{2}), X_{2}^{n}(m_{2}),\\ V_{1}^{n}(A_{1.m_{1}},m_{1}) \oplus V_{2}^{n}(A_{2.m_{2}},m_{2}) \end{array} \!\!\!\right) \in T_{{8\delta}}(p_{V_{1}X_{1}V_{2}X_{2}U})  \right\}.
\end{align}
We have
\begin{eqnarray}
 \label{Eqn:Msg-SumThmStep1PreHayaNaga1}
 \mathbb{E}_{\mathcal{P}}\left\{ \sum_{m_{1}}\sum_{m_{2}}p_{M_{1}M_{2}}(m_{1},m_{2})\tr(\left[I-\Lambda_{m_{1}\oplus m_{2}} \right])\rho_{m_{1}m_{2}}^{\otimes n}  \right\}\leq \underbrace{\mathbb{E}_{\mathcal{P}}\left\{ \sum_{m_{1}}\sum_{m_{2}}p_{M_{1}M_{2}}(m_{1},m_{2})\tr(\left[I-\Lambda_{m_{1}\oplus m_{2}} \right])\rho_{m_{1}m_{2}}^{\otimes n}  \mathds{1}_{\mathscr{E}^{c}}\right\}}_{T_{1}}\nonumber\\
 \label{Eqn:Msg-SumThmStep1PreHayaNaga2}
  + \underbrace{\mathbb{E}_{\mathcal{P}}\left\{ \sum_{m_{1}}\sum_{m_{2}}p_{M_{1}M_{2}}(m_{1},m_{2})\tr(\left[I-\Lambda_{m_{1}\oplus m_{2}} \right])\rho_{m_{1}m_{2}}^{\otimes n}  \mathds{1}_{\mathscr{E}}\right\}}_{T_{2}}.\nonumber
\end{eqnarray}
In regards to $T_{1}$, the sub-POVM nature of $\Lambda_{\mathcal{I}}$ and the fact that $\rho_{m_{1},m_{2}}^{\otimes n}$ is a density operator enables us conclude $T_{1} \leq \mathbb{E}_{\mathcal{P}}\{ \mathds{1}_{\mathscr{E}^{c}}\}$. Furthermore, observe that $X_{j}(m_{j)}$ is distributed with PMF $p_{X_{j}|V_{j}}^{n}$ conditionally on $V_{j}^{n}(A_{j,m_{j},m_{j}})$. (See (\ref{Eqn:DistOfRandomCode}). In addition, $p_{V_{1}X_{1}V_{2}X_{2}} = p_{V_{1}X_{1}}p_{V_{2}X_{2}}$ implies that, standard conditional typicality arguments yields
\begin{eqnarray}
 \label{Eqn:Msg-SumThmStep1-T1Anal1}
 \mathbb{E}_{\mathcal{P}}\{ \mathds{1}_{\mathscr{E}^{c}}\} \leq 
 \mathbb{E}_{\mathcal{P}}\left\{ \sum_{m_{1}}p_{M_{1}}(m_{1})\mathds{1}_{\{\Theta_{1}(m_{1}) = 0 \}}  + \sum_{m_{2}}p_{M_{2}}(m_{2})\mathds{1}_{\{\Theta_{1}(m_{2}) = 0 \}} \right\} + \exp\{ -n \delta\},
 \end{eqnarray}
where $\delta$ is chosen appropriately. In the above inequality, the second term on the RHS is an upper bound on the probability of the event $(X_{1}^{n}(m_{1},X_{2}^{n}(m_{2})) \notin T_{\delta}^{n}(p_{V_{1}X_{1}V_{2}X_{2}U}|v_{1}^{n},v_{2}^{n},v_{1}^{n}\oplus v_{2}^{n})$ conditioned on $ (V_{1}^{n}(A_{1.m_{1}},m_{1}),V_{2}^{n}(A_{2.m_{2}},m_{2}),V_{1}^{n}(A_{1.m_{1}},m_{1}) \oplus V_{2}^{n}(A_{2.m_{2}},m_{2})) =   (v_{1}^{n},v_{2}^{n},v_{1}^{n}\oplus v_{2}^{n}) \in T_{\delta}^{n}(p_{V_{1}V_{2}U})$ and the first term provides an upper bound on the complement of the latter event. An upper bound on $T_{1}$ therefore reduces to deriving an upper bound on the first term on the RHS of (\ref{Eqn:Msg-SumThmStep1-T1Anal1}). This task - deriving an upper bound on the first term on the RHS of (\ref{Eqn:Msg-SumThmStep1-T1Anal1}) - being a classical analysis, has been detailed in several earlier works \cite{201710TIT_PadPra, 201603TIT_PadSahPra, 201405PhDThe_Pad, 201804TIT_PadPra} and in particular \cite[Proof of Thm.~2.5]{2020Bk_PraPadShi} or \cite[Appendix B]{201710TIT_PadPra}. Following this, we have
\begin{eqnarray}
 \label{Eqn:Msg-SumThmStep1-T1Bnd1}
 \mathbb{E}_{\mathcal{P}}\left\{ \sum_{m_{j}}p_{M_{j}}(m_{j})\mathds{1}_{\{\Theta_{j}(m_{j}) = 0 \}}\right\} &\leq& \exp\left\{ -n \left( \frac{k\log q}{n} - [\log q- H(V_{j})] \right) \right\} 
\end{eqnarray}
 thereby ensurnig $T_{1} \leq 2\exp\{-n\delta\}$ if
\begin{align}
\frac{k\log q}{n} &\geq  \max\left\{ \log q- H(V_{1}), \log q- H(V_{2})\right\} = \log q - \min\{H(V_{1}),H(V_{2}) \} . \label{Eqn:Msg-SumThmStep1-T1Bnd2}
\end{align}
We now analyze $T_{2}$. Applying the Hayashi-Nagaoka inequality, we have$T_{2} \stackrel{(a)}{\leq} T_{21}+T_{22}+T_{23},$ where
\begin{eqnarray}
 \label{Eqn:Msg-SumThmStep1-T2-1}
 T_{21}\deq \mathbb{E}_{\mathcal{P}}\left\{ 2 \sum_{m_{1}}\sum_{m_{2}}p_{M_{1}M_{2}}(m_{1},m_{2}) \tr([I-\Gamma_{A_{\ulinem}^{\oplus},\ulinem^{\oplus}}]\rho_{m_{1}m_{2}}^{\otimes n}  ])\mathds{1}_{\mathscr{E}}\right\} \\ 
 T_{22} \define \mathbb{E}_{\mathcal{P}}\left\{4 \sum_{m_{1}}\sum_{m_{2}}\sum_{\hata \neq A_{\ulinem}^{\oplus}}p_{M_{1}M_{2}}(m_{1},m_{2}) \tr(\Gamma_{\hata,\ulinem^{\oplus}}\rho_{m_{1}m_{2}}^{\otimes n}  )\mathds{1}_{\mathscr{E}}\right\}\nonumber\\
 T_{23} \define \mathbb{E}_{\mathcal{P}}\left\{4 \sum_{m_{1}}\sum_{m_{2}}\sum_{\hata \neq A_{\ulinem}^{\oplus}}\sum_{\hatm \neq \ulinem^{\oplus}}p_{M_{1}M_{2}}(m_{1},m_{2}) \tr(\Gamma_{\hata,\hatm}\rho_{m_{1}m_{2}}^{\otimes n}  )\mathds{1}_{\mathscr{E}}\right\},\nonumber
\end{eqnarray}
and  $A_{\ulinem}^{\oplus}\define A_{1,m_{1}} \oplus A_{2,m_{2}} \in \CalV^{k}, \ulinem^{\oplus} \define m_{1}\oplus m_{2}  \in \CalV^{l}$. We note that (\ref{Eqn:Msg-SumThmStep1-T2-1}(a)) follows from an argument analogous to the one in (\ref{Eqn:CQ-PTPErrorProbability}). We now analyze $T_{21},T_{22}$ and $T_{23}$. We begin with $T_{21}$. Deriving an upper bound on $T_{21}$ is by deriving a lower bound $\mathbb{E}_{\mathcal{P}}\left\{ \tr(\Gamma_{A_{\ulinem}^{\oplus},\ulinem^{\oplus}}\rho_{m_{1}m_{2}}^{\otimes n}  ])\mathds{1}_{\mathscr{E}} \right\}$. This follows by an argument that is colloquially referred to as `pinching'. Refer to Lemma \ref{Lem:CharHighProbSubAKAPinching} in Appendix \ref{AppSec:CharHighProbSubAKAPinching}. Set $\mathcal{A}=\mathcal{V}=\mathcal{F}_{q}$, $\mathcal{B} = \mathcal{X}$, $p_{AB}=p_{V_{1}\oplus V_{2},X}$ and the density operators correspondingly. With this choice, Lemma \ref{Lem:CharHighProbSubAKAPinching} proves the existence of $\lambda>0$ such that $\mathbb{E}_{\mathcal{P}}\left\{ \tr(\Gamma_{A_{\ulinem}^{\oplus},\ulinem^{\oplus}}\rho_{m_{1}m_{2}}^{\otimes n}  ])\mathds{1}_{\mathscr{E}} \right\} \geq 1- \exp\{-n\lambda\delta^{2} \}$ for sufficiently large $n$.

We now analyze $T_{22}$. Denoting the event
\begin{eqnarray}\label{Eqn:Msg-SumThmStep1-T22-1}
\mathcal{J} \!\define\! \left\{ \left(\!\!\! \begin{array}{c} V_{1}^{n}(A_{1.m_{1}},m_{1}), X_{1}^{n}(m_{1}),V_{2}^{n}(A_{2.m_{2}},m_{2}), X_{2}^{n}(m_{2}) \end{array} \!\!\!\right)= (v_{1}^{n},x_{1}^{n},v_{2},x_{2}) \in T_{\delta_{4}}(p_{V_{1}X_{1}V_{2}X_{2}})  \right\},
\end{eqnarray}
 abbreviating $\ulinev_{\oplus}^{n}=v_{1}^{n}\oplus v_{2}^{n} $, $\ulinea^{\oplus}=a_{1}\oplus a_{2}$, we have
\begin{align}
\label{Eqn:Msg-SumThmStep1-T22-2}
 \mathbb{E}_{\mathcal{P}}[T_{22}] &= \mathbb{E}_{\mathcal{P}}\left\{4 \sum_{\ulinem}\sum_{a_{1},a_{2}}\sum_{\hata \neq \ulinea^{\oplus}}\sum_{\substack{(\ulinev^{n},\ulinex) \in \\ T_{\delta_{4}}(p_{\ulineV\ulineX})}}p_{\ulineM}(\ulinem) \tr(\Gamma_{\hata,\ulinem^{\oplus}}\rho_{m_{1}m_{2}}^{\otimes n}  )\mathds{1}_{\mathcal{J}}\mathds{1}_{\left\{ A_{j,m_{j}}=a_{j}:j \in [2] \right\}}\right\} \\
 \label{Eqn:Msg-SumThmStep1-T22-3}
 &\!\!\!\!\!\!\!\!\!\!\!\!= 4 \sum_{\ulinem}\sum_{a_{1},a_{2}}\sum_{\hata \neq \ulinea^{\oplus}}\sum_{\substack{(\ulinev^{n},\ulinex) \in \\ T_{\delta_{4}}(p_{\ulineV\ulineX})}}\sum_{\hatv^{n} \in \mathcal{V}^{n}}p_{\ulineM}(\ulinem) \tr(\pi_{\hatv^{n}}\pi_{\rho}\rho_{x_{1}^{n}x_{2}^{n}}^{\otimes n}\pi_{\rho}  )\mathbb{E}_{\mathcal{P}}
 \left\{\mathds{1}_{\mathcal{J}}\mathds{1}_{\left\{\!\!\!\begin{array}{c} A_{j,m_{j}}=a_{j}:j \in [2]\\ U^{n}(\hata,m_{1}\oplus m_{2})=\hatv^{n}\end{array}\!\!\!\right\}}\right\} \\
 \label{Eqn:Msg-SumThmStep1-T22-4}
 &\!\!\!\!\!\!\!\!\!\!\!\!\leq 4 \sum_{\ulinem}\sum_{a_{1},a_{2}}\sum_{\hata \neq \ulinea^{\oplus}}\sum_{\substack{(\ulinev^{n}) \in\\ T_{\delta_{4}}(p_{\ulineV})}}\sum_{\ulinex^{n}\in \ulineCalX^{n}}\sum_{\hatv^{n} \in \mathcal{V}^{n}}p_{\ulineM}(\ulinem)\left[\prod_{j=1}^{2}p_{X_{j}|V_{j}}(x_{j}^{n}|v_{j}^{n})\right] \tr(\pi_{\hatv^{n}}\pi_{\rho}\rho_{x_{1}^{n}x_{2}^{n}}^{\otimes n}\pi_{\rho}  )\mathcal{P}\left(\!\!\!
 { \begin{array}{c} V_{j}^{n}(a_{j},m_{j})=v_{j}^{n}\\ A_{j,m_{j}}=a_{j}:j \in [2]\\ U^{n}(\hata,m_{1}\oplus m_{2})=\hatv^{n}\end{array}\!\!\!}\right) 
 \end{align}
 \begin{align}\label{Eqn:Msg-SumThmStep1-T22-5}
 &\!\!\!\!\!\!\!\!\!\!\!\!= 4 \sum_{\ulinem}\sum_{a_{1},a_{2}}\sum_{\hata \neq \ulinea^{\oplus}}\sum_{\substack{(\ulinev^{n}) \in\\ T_{\delta_{4}}(p_{\ulineV})}}\sum_{\hatv^{n} \in \mathcal{V}^{n}}p_{\ulineM}(\ulinem) \tr(\pi_{\hatv^{n}}\pi_{\rho}\rho_{v_{1}^{n}v_{2}^{n}}^{\otimes n}\pi_{\rho}  )\mathcal{P}\left(\!\!\!
 { \begin{array}{c} V_{1}^{n}(a_{1},m_{1})=v_{1}^{n}, A_{j,m_{j}}=a_{j}:j \in [2]\\ V_{2}^{n}(a_{2},m_{2})=v_{2}^{n}, U^{n}(\hata,m_{1}\oplus m_{2})=\hatv^{n}\end{array}\!\!\!}\right)\\
 \label{Eqn:Msg-SumThmStep1-T22-6}
 &\!\!\!\!\!\!\!\!\!\!\!\!\leq 4 \sum_{\ulinem}\sum_{a_{1},a_{2}}\sum_{\hata \neq \ulinea^{\oplus}}\sum_{\substack{(\ulinev^{n}) \in\\ T_{\delta_{4}}(p_{\ulineV})}}\sum_{\hatv^{n} \in \mathcal{V}^{n}}p_{\ulineM}(\ulinem) \tr(\pi_{\hatv^{n}}\pi_{\rho}\rho_{v_{1}^{n}v_{2}^{n}}^{\otimes n}\pi_{\rho}  )\mathcal{P}\left(\!\!\!
 { \begin{array}{c} V_{1}^{n}(a_{1},m_{1})=v_{1}^{n}, \\ V_{2}^{n}(a_{2},m_{2})=v_{2}^{n}, U^{n}(\hata,m_{1}\oplus m_{2})=\hatv^{n}\end{array}\!\!\!}\right)\\
 \label{Eqn:Msg-SumThmStep1-T22-7}
 &\!\!\!\!\!\!\!\!\!\!\!\!\leq 4 \sum_{\ulinem}\sum_{a_{1},a_{2}}\sum_{\hata \neq \ulinea^{\oplus}}\sum_{\substack{(\ulinev^{n}) \in\\ T_{\delta_{4}}(p_{\ulineV})}}\sum_{\hatv^{n} \in \mathcal{V}^{n}}\!\!\!\!p_{\ulineM}(\ulinem) \tr(\pi_{\hatv^{n}}\pi_{\rho}\rho_{v_{1}^{n}v_{2}^{n}}^{\otimes n}\pi_{\rho}  )\frac{1}{q^{3n}} \nonumber\\ 
 &\!\!\!\!\!\!\!\!\!\!\!\! \leq 4 \sum_{\substack{ \ulinem,\\ a_{1},a_{2}}}\sum_{\hata }\sum_{\hatv^{n} \in \mathcal{V}^{n}}\!\!\!\!p_{\ulineM}(\ulinem) \tr(\pi_{\hatv^{n}}\pi_{\rho}  )\frac{2^{-n(S(\rho)-H(V_{1},V_{2})-4\delta )}}{q^{3n}} \\
 \label{Eqn:Msg-SumThmStep1-T22-8}
 &\!\!\!\!\!\!\!\!\!\!\!\!=4 \sum_{\substack{ \ulinem,\\ a_{1},a_{2}}}\sum_{\hata }\sum_{\hatv^{n} \in T_{\delta}(V_{1}\oplus V_{2})}p_{\ulineM}(\ulinem) \tr(\pi_{\hatv^{n}}\pi_{\rho}  )\frac{2^{-n(S(\rho)-H(V_{1},V_{2})-4\delta )}}{q^{3n}} \\
 \label{Eqn:Msg-SumThmStep1-T22-9}
 &\!\!\!\!\!\!\!\!\!\!\!\!\leq 4 \sum_{\substack{ \ulinem,\\ a_{1},a_{2}}}\sum_{\hata }p_{\ulineM}(\ulinem) \frac{\exp{-n\left[ S(\rho)-H(V_{1},V_{2})-4\delta-\displaystyle\sum_{u}p_{V_{1}\oplus V_{2}}(u)S(\rho_{u}) -H(V_{1}\oplus V_{2})\right] }}{q^{3n}}\\
 \label{Eqn:Msg-SumThmStep1-T22-10}
 &\!\!\!\!\!\!\!\!\leq 4\exp{-n\left\{ \left[\log q - \left(\displaystyle\sum_{u}p_{V_{1}\oplus V_{2}}(u)S(\rho_{u}) +H(V_{1}\oplus V_{2})- S(\rho)\right)\right]+\left[2\log q -H(V_{1},V_{2})\right]-\frac{3k\log q}{n}  \right\}}
\end{align}
where (i) (\ref{Eqn:Msg-SumThmStep1-T22-3}) follows from a summing over possible choices for $U^{n}(\hata,m_{1}\oplus m_{2})$, (ii) (\ref{Eqn:Msg-SumThmStep1-T22-4}) follows from evaluating expectation, enlarging the summation range of $x_{1}^{n},x_{2}^{n}$ and substituting the distribution of the random code, (iii) (\ref{Eqn:Msg-SumThmStep1-T22-5}) follows from the definitions of $\rho_{v_{1}v_{2}} : \ulinev \in \ulineCalV$, (iv) (\ref{Eqn:Msg-SumThmStep1-T22-6}) follows as an upper bound since the event in question has been enlarged, (v) (\ref{Eqn:Msg-SumThmStep1-T22-7}) follows from \cite[Lemma N.0.21\(c\)]{201405PhDThe_Pad} and the operator inequality $ \sum_{x^n \in T_{\delta}(p_{V})}\pi_{\rho}\rho_{x^{n}}\pi_{\rho} \leq 2^{n(H(p_{V})+\epsilon_V(\delta)) }\pi_{\rho}\rho^{\otimes n}\pi_{\rho} \leq 2^{n(H(p_{V})+\epsilon_V(\delta)-S(\rho)) }\pi_{\rho}$ found in \cite[Eqn. 20.34, 15.20]{2017BkOnline_Wil}, (vi) (\ref{Eqn:Msg-SumThmStep1-T22-8}) follows from the definition of $\pi_{\hatv^{n}}$ which is the $0$ projector if $\hatv^{n}$ is not typical wrt $p_{V_{1}\oplus V_{2}}$, (vii) (\ref{Eqn:Msg-SumThmStep1-T22-9}) follows from $\pi_{\rho} \leq I $ and \cite[Eqn. 15.77]{2017BkOnline_Wil} and, (viii) (\ref{Eqn:Msg-SumThmStep1-T22-10}) collating all the bounds. We now analyze $T_{23}$. 

\begin{eqnarray}
\label{Eqn:Msg-SumThmStep1-T23-2}
 \lefteqn{\mathbb{E}_{\mathcal{P}}[T_{23}] = \mathbb{E}_{\mathcal{P}}\left\{4 \sum_{\substack{\ulinem\in \mathcal{V}^{2l}\\\hatm \neq m_{1}\oplus m_{2} }}\sum_{\substack{\ulinea \in \CalV^{2k} \\\hata \in \CalV^{k}}}\sum_{\substack{(\ulinev^{n},\ulinex) \in \\ T_{\delta_{4}}(p_{\ulineV\ulineX})}}p_{\ulineM}(\ulinem) \tr(\Gamma_{\hata,\ulinem^{\oplus}}\rho_{m_{1}m_{2}}^{\otimes n}  )\mathds{1}_{\mathcal{J}}\mathds{1}_{\left\{ A_{j,m_{j}}=a_{j}:j \in [2] \right\}}\right\}}  \\
 \label{Eqn:Msg-SumThmStep1-T23-3}
 &&\!\!\!\!\!\!\!\!\!\!\!\!= 4\!\!\!\! \sum_{\substack{\ulinem\in \mathcal{V}^{2l}\\\hatm \neq m_{1}\oplus m_{2} }}\sum_{\substack{\ulinea \in \CalV^{2k} \\\hata \in \CalV^{k}}}\sum_{\substack{(\ulinev^{n},\ulinex) \in \\ T_{\delta_{4}}(p_{\ulineV\ulineX})}}\sum_{\hatv^{n} \in \mathcal{V}^{n}}p_{\ulineM}(\ulinem) \tr(\pi_{\hatv^{n}}\pi_{\rho}\rho_{x_{1}^{n}x_{2}^{n}}^{\otimes n}\pi_{\rho}  )\mathbb{E}_{\mathcal{P}}
 \left\{\mathds{1}_{\mathcal{J}}\mathds{1}_{\left\{\!\!\!\begin{array}{c} A_{j,m_{j}}=a_{j}:j \in [2]\\ U^{n}(\hata,\hatm)=\hatv^{n}\end{array}\!\!\!\right\}}\right\} \\
 \label{Eqn:Msg-SumThmStep1-T23-4}
 &&\!\!\!\!\!\!\!\!\!\!\!\!\leq 4 \!\!\!\!\sum_{\substack{\ulinem\in \mathcal{V}^{2l}\\\hatm \neq m_{1}\oplus m_{2} }}\sum_{\substack{\ulinea \in \CalV^{2k} \\\hata \in \CalV^{k}}}\sum_{\substack{(\ulinev^{n}) \in\\ T_{\delta_{4}}(p_{\ulineV})}}\sum_{\ulinex^{n}\in \ulineCalX^{n}}\sum_{\hatv^{n} \in \mathcal{V}^{n}}p_{\ulineM}(\ulinem)\left[\prod_{j=1}^{2}p_{X_{j}|V_{j}}(x_{j}^{n}|v_{j}^{n})\right] \tr(\pi_{\hatv^{n}}\pi_{\rho}\rho_{x_{1}^{n}x_{2}^{n}}^{\otimes n}\pi_{\rho}  )\mathcal{P}\left(\!\!\!
 { \begin{array}{c} V_{j}^{n}(a_{j},m_{j})=v_{j}^{n}\\ A_{j,m_{j}}=a_{j}:j \in [2]\\ U^{n}(\hata,\hatm)=\hatv^{n}\end{array}\!\!\!}\right)\\
 \label{Eqn:Msg-SumThmStep1-T23-5}
 &&\!\!\!\!\!\!\!\!\!\!\!\!= 4 \!\!\!\!\sum_{\substack{\ulinem\in \mathcal{V}^{2l}\\\hatm \neq m_{1}\oplus m_{2} }}\sum_{\substack{\ulinea \in \CalV^{2k} \\\hata \in \CalV^{k}}}\sum_{\substack{(\ulinev^{n}) \in\\ T_{\delta_{4}}(p_{\ulineV})}}\sum_{\hatv^{n} \in \mathcal{V}^{n}}p_{\ulineM}(\ulinem) \tr(\pi_{\hatv^{n}}\pi_{\rho}\rho_{v_{1}^{n}v_{2}^{n}}^{\otimes n}\pi_{\rho}  )\mathcal{P}\left(\!\!\!
 { \begin{array}{c} V_{1}^{n}(a_{1},m_{1})=v_{1}^{n}, A_{j,m_{j}}=a_{j}:j \in [2]\\ V_{2}^{n}(a_{2},m_{2})=v_{2}^{n}, U^{n}(\hata,\hatm)=\hatv^{n}\end{array}\!\!\!}\right)\\
 \label{Eqn:Msg-SumThmStep1-T23-6}
 &&\!\!\!\!\!\!\!\!\!\!\!\!\leq 4 \!\!\!\!\sum_{\substack{\ulinem\in \mathcal{V}^{2l}\\\hatm \neq m_{1}\oplus m_{2} }}\sum_{\substack{\ulinea \in \CalV^{2k} \\\hata \in \CalV^{k}}}\sum_{\substack{(\ulinev^{n}) \in\\ T_{\delta_{4}}(p_{\ulineV})}}\sum_{\hatv^{n} \in \mathcal{V}^{n}}p_{\ulineM}(\ulinem) \tr(\pi_{\hatv^{n}}\pi_{\rho}\rho_{v_{1}^{n}v_{2}^{n}}^{\otimes n}\pi_{\rho}  )\mathcal{P}\left(\!\!\!
 { \begin{array}{c} V_{1}^{n}(a_{1},m_{1})=v_{1}^{n}, \\ V_{2}^{n}(a_{2},m_{2})=v_{2}^{n}, U^{n}(\hata,\hatm)=\hatv^{n}\end{array}\!\!\!}\right)\\
 \label{Eqn:Msg-SumThmStep1-T23-7}
 &&\!\!\!\!\!\!\!\!\!\!\!\!\leq 4 \!\!\!\!\sum_{\substack{\ulinem\in \mathcal{V}^{2l}\\\hatm \neq m_{1}\oplus m_{2} }}\sum_{\substack{\ulinea \in \CalV^{2k} \\\hata \in \CalV^{k}}}\sum_{\substack{(\ulinev^{n}) \in\\ T_{\delta_{4}}(p_{\ulineV})}}\sum_{\hatv^{n} \in \mathcal{V}^{n}}\!\!\!\!p_{\ulineM}(\ulinem) \tr(\pi_{\hatv^{n}}\pi_{\rho}\rho_{v_{1}^{n}v_{2}^{n}}^{\otimes n}\pi_{\rho}  )\frac{1}{q^{3n}} \leq 4 \!\!\!\!\sum_{\substack{\ulinem\in \mathcal{V}^{2l}\\\hatm \neq m_{1}\oplus m_{2} }}\sum_{\substack{\ulinea \in \CalV^{2k} \\\hata \in \CalV^{k}}}\sum_{\hatv^{n} \in \mathcal{V}^{n}}\!\!\!\!p_{\ulineM}(\ulinem) \tr(\pi_{\hatv^{n}}\pi_{\rho}  )\frac{2^{-n(S(\rho)-H(V_{1},V_{2})-4\delta )}}{q^{3n}} \nonumber
\end{eqnarray}
 \begin{eqnarray}
 \label{Eqn:Msg-SumThmStep1-T23-8}
 &&\!\!\!\!\!\!\!\!\!\!\!\!=4 \sum_{\substack{ \ulinem,\\ a_{1},a_{2}}}\sum_{\hata }\sum_{\hatv^{n} \in T_{\delta}(V_{1}\oplus V_{2})}p_{\ulineM}(\ulinem) \tr(\pi_{\hatv^{n}}\pi_{\rho}  )\frac{2^{-n(S(\rho)-H(V_{1},V_{2})-4\delta )}}{q^{3n}} \\
 \label{Eqn:Msg-SumThmStep1-T23-9}
 &&\!\!\!\!\!\!\!\!\!\!\!\!\leq 4 \sum_{\substack{ \ulinem,\\ a_{1},a_{2}}}\sum_{\hata }p_{\ulineM}(\ulinem) \frac{\exp{-n\left[ S(\rho)-H(V_{1},V_{2})-4\delta-\displaystyle\sum_{u}p_{V_{1}\oplus V_{2}}(u)S(\rho_{u}) -H(V_{1}\oplus V_{2})\right] }}{q^{3n}}\\
 \label{Eqn:Msg-SumThmStep1-T23-10}
 &&\!\!\!\!\!\!\!\!\leq 4\exp{-n\left\{ \left[\log q - \left(\displaystyle\sum_{u}p_{V_{1}\oplus V_{2}}(u)S(\rho_{u}) +H(V_{1}\oplus V_{2})- S(\rho)\right)\right]+\left[2\log q -H(V_{1},V_{2})\right]-\frac{(3k+l)\log q}{n}  \right\}}.
\end{eqnarray}
The above sequence of steps is analogous to those used to derive an upper bound on $T_{22}$ and follow from the same set of arguments as provided for the bounds (\ref{Eqn:Msg-SumThmStep1-T22-3}) - (\ref{Eqn:Msg-SumThmStep1-T22-10}). This completes the proof of the claimed statement.
\end{proof}

We conclude this section with our main result in regards to decoding sum of sources. The proof of the following theorem follows from the discussion provided just prior to Definition \ref{Defn:AdmPMFsForMessageSumRate}. We therefore omit a detailed proof but just state the encoding and decoding techniques for completeness.

\begin{theorem}\label{thm:SumCQ_MAC}
The sum of a pair of sources distributed with PMF $\WW_{S_1S_2}$ can be reconstructed on a CQ-MAC $\rho_{\ulineCalX}= (\rho_{x_{1}x_{2}} \in \mathcal{D}(\mathcal{H}_{\ulineY}): (x_{1},x_{2}) \in \mathcal{X}_{1}\cross \mathcal{X}_{2})$ if $ H(S_1 \oplus_q S_2) < \mathscr{R}(\rho_{\ulineCalX},q)$.
\end{theorem}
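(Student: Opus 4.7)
The plan is to build a source--channel separation scheme in which a K\"orner--Marton (KM) source encoder feeds its parity indices into the message-sum CQ-MAC code furnished by Lemma~\ref{Lem:DecSumOfMsgOnCQ-MAC}. Under the hypothesis $H(S_1\oplus_q S_2) < \mathscr{R}(\rho_{\ulineCalX},q)$, I would fix a rate $R$ with $H(S_1\oplus_q S_2) < R < \mathscr{R}(\rho_{\ulineCalX},q)$, set $l(n) = \lceil nR / \log q \rceil$ so that $\tfrac{l(n)\log q}{n} \to R$, and then compose the two layers in the obvious way.

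For the source layer, I would invoke the classical KM result recalled at the start of Section~\ref{Sec:CentralIdea}: for all $n$ sufficiently large there exist a parity matrix $H^{(n)} \in \mathcal{F}_q^{l(n) \times n}$ and a decoder $d^{(n)}: \mathcal{F}_q^{l(n)} \to \mathcal{S}^n$ such that the event $\{d^{(n)}(H^{(n)}S_1^n \oplus_q H^{(n)}S_2^n) \neq S_1^n \oplus_q S_2^n\}$ has $\WW_{S_1S_2}^n$-probability tending to zero. For the channel layer, I would pick $(p_{V_1V_2U}, \rho_u) \in \mathscr{P}(\rho_{\ulineCalX},q)$ close enough to the supremum in the definition of $\mathscr{R}(\rho_{\ulineCalX},q)$ and invoke Lemma~\ref{Lem:DecSumOfMsgOnCQ-MAC} to obtain a CQ-MAC code $(n, \mathcal{F}_q^{l(n)}, \mathcal{F}_q^{l(n)}, \tilde e_1, \tilde e_2, \lambda_{\CalI})$ whose message-sum decoding error tends to zero. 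The concatenated scheme has sender $j$ produce $m_j = H^{(n)} s_j^n$ and transmit $x_j^n = \tilde e_j(m_j)$; the receiver applies $\lambda_{\CalI}$ to obtain $\hat m \in \mathcal{F}_q^{l(n)}$ and outputs $\hat r^n = d^{(n)}(\hat m)$. A union bound over the source-layer failure event and the channel-layer decoding error yields the claim.

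The main obstacle I expect is that the joint PMF on $(M_1,M_2) = (H^{(n)}S_1^n, H^{(n)}S_2^n)$ induced by $\WW_{S_1S_2}^n$ is in general not a product, whereas Definition~\ref{Defn:CQMACChnlCode-MsgSumCap} phrases the message-sum guarantee in terms of the product marginals $p_{M_1}p_{M_2}$. I would resolve this by inspecting the proof of Lemma~\ref{Lem:DecSumOfMsgOnCQ-MAC}: the termwise upper bounds on $T_1$ and on $T_{21}, T_{22}, T_{23}$ treat $p_{\ulineM}$ as a generic PMF and nowhere exploit independence of $M_1$ and $M_2$, so the same expected-error estimate survives under the KM-induced joint PMF. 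A standard expurgation then extracts a deterministic code that works for the concrete source distribution, after which the union bound closes the argument. The rest is bookkeeping in $\delta$ and $n$.
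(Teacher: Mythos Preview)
Your proposal is correct and follows essentially the same separation-based architecture as the paper: a K\"orner--Marton source layer composed with the message-sum channel code of Lemma~\ref{Lem:DecSumOfMsgOnCQ-MAC}, with the overall error controlled by a union bound. You are in fact more careful than the paper about the product-versus-joint PMF issue in Definition~\ref{Defn:CQMACChnlCode-MsgSumCap}, and your resolution---that the bounds in the proof of Lemma~\ref{Lem:DecSumOfMsgOnCQ-MAC} treat $p_{\ulineM}$ generically---is exactly the right one.
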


\begin{proof}
We begin with an outline of our coding scheme. As stated in Sec.~\ref{Sec:CentralIdea}, we propose a `separation based approach' with two modules - source and channel. The source coding module employs a (distributed) K\"orner Marton (KM) source code. Specifically, \cite{197903TIT_KorMar} guarantees the existence of a parity check matrix $h \in \mathcal{F}_{q}^{ l \times n}= \mathcal{S}^{l \times n}$ and a decoder map $d:\mathcal{F}_{q}^{l} \rightarrow \mathcal{S}^{n}$ such that $\sum_{\ulines^{n} \in \ulineCalS^{n}} \mathbb{W}_{\ulineS}^{n} (\ulines^{n})\mathds{1}_{\{d(hs_{1}^{n} \oplus_{q} hs_{2}^{n}) \neq s_{1}^{n}\oplus_{q} s_{2}^{n}\}} \leq \epsilon$,
for any $\epsilon > 0$, and sufficiently large $n$, so long as $ \frac{l\log_{2}q}{n} > H(S_{1}\oplus_{q} S_{2})$. 

Both encoders of this KM source coding module employ one such parity check matrix $h \in \mathcal{F}_{q}^{ l \times n}$. The decoder of the KM source code employs the corresponding decoder map $d$. KM Source encoder $j$ outputs $M_{j}^{l} = h(S_{j}^{n})$. If the KM source decoder is provided $M_{1}^{l}\oplus_{q} M_{2}^{l}$, then it can reconstruct $S_{1}^{n}\oplus_{q}S_{2}^{n}$ with reliability at least $1-\epsilon$. The task of the CQ-MAC channel coding module is to make $M_{1}^{l}\oplus_{q} M_{2}^{l}$ available to the KM source decoder. We are thus confronted with the problem of designing a CQ-MAC channel coding module that can reliably communicate the sum of the messages indices that are input at the encoders.

Specifically, this channel coding module must communicate $M_{1}^{l}\oplus_{q} M_{2}^{l} \in \mathcal{F}_{q}^{l}$ within $n$ channel uses. If we can prove that there exists a MAC channel coding module for sufficiently large $n$ so long as
\begin{eqnarray}
\frac{l\log_{2}q}{n} < \min\{\!H(V_1),\!H(V_2)\!\}\! - \! H(U)\! +\!\chi(\{p_U;\rho_{u}\})
\nonumber
\end{eqnarray}
for any choice of auxiliary 

The source module employs the KM code. The corresponding KM decoder at the receiver only requires the sum of the message indices output by the KM code. The CQ-MAC channel coding module needs to communicate only the \textit{sum} of the two message indices input by the two encoders. 
Given $\epsilon_c$, we seek to identify a CQ-MAC code $c = (n,e_1,e_2,\CalM)$ such that $\overline{\xi}(c) \leq \epsilon_c$.

\noindent \textbf{Encoding:} The process of mapping source sequences to the CQ-MAC channel inputs is divided into two stages. In the first stage, a distributed source code proposed in \cite{197903TIT_KorMar} is employed which maps the $n-$length source sequences to message indices taking values over $\CalF_q^l$. For the  second stage we develop functions mapping these message indices to channel input codewords. We begin by defining the first stage of encoding which relies on Lemma 1 of \cite{201301arXivComputation_PadPra}.
This lemma guarantees the existence of a parity check matrix $h \in \CalF_q^{l(n)\cross n}$ and a map $d:\CalF_q^l(n) \rightarrow \CalF_q^n$,  for a sufficiently large $n$, such that (i) $\frac{l(n)}{n} \leq H(S_1 \oplus_q S_2) + \frac{\epsilon_c}{2}$ and (ii) $\PP(d(hS_1^n \oplus hS_2^n) \neq S_1^n \oplus S_2^n) \leq \frac{\epsilon_c}{2}$. We use one such parity matrix which satisfies the above conditions and define $M_j \deq hS_j,$ for $j=1,2$. This forms our first stage encoder.

Moving on to the second stage encoder, let us denote the maps of the two encoders as $\kappa_j: \CalF_q^l \rightarrow \CalX_j^n: j=1,2.$. For this stage, we use the NCC encoding developed in Section \ref{Sec:NCCAchieveCQ-PTPCapacity} for a CQ-PTP. 
Consider two NCCs with parameters $(n,k,l,g_I,g_{O/I},b_j^n): j\in \{1,2\}$  with range spaces as $ v_j^n(a,m_j) \deq ag_I \oplus m_j g_{O/I} \oplus b_j^n :j\in \{1,2\}, $ respectively. Note that the two NCCs share the common $g_{I}$ and $g_{O/I},$ but not  necessarily the bias vector $b_j^n$. Encoder  $j$ then constructs its NCC CQ-PTP code $(n,\CalI=\CalF_q^l,e_j,\lambda^j_{\CalI})$ using the corresponding NCC $(n,k,l,g_I,g_{O/I},b_j^n)$ as described in Definition \ref{Eqn:CQPTPPOVMDefn}. This defines the second stage encoding. Integrating the two stages, we obtain the following. To transmit the source sequence pair $(s_1^n,s_2^n)$ the sequence pair $(e_1(hs_1^n), e_2(hs_2^n))$ is send over the CQ-MAC channel which produces the quantum state $\rho_{e_1(hs_1^n), e_2(hs_2^n)}$ as the output. 


After performing the measurement and decoding the message $\hat{m}$, the decoder then employs the KM decoder $d(.)$ to obtain the sum of sources $d(\hat{m})$. An analysis of this coding scheme is provided in the Proof of Lemma \ref{Lem:DecSumOfMsgOnCQ-MAC}.



\end{proof}
\section{Decoding arbitrary functions over CQ-MAC}
\label{sec:arbitraryCQ-MAC}

Leveraging the technique developed in Theorem \ref{thm:SumCQ_MAC}, we provide the following proposition to reconstruct an arbitrary function of the sources
\begin{prop}
\label{Prop:GeneralFnOverArbCQMAC}
The function $f:\ulineCalS \rightarrow \CalS$ of sources $\mathbb{W}_{S_{1}S_{2}}$ can be reconstructed on a CQ-MAC $(\rho_{x_{1}x_{2}} \in \mathcal{D}(\mathcal{H}_{\ulineY}): (x_{1},x_{2}) \in \mathcal{X}_{1}\cross \mathcal{X}_{2})$  if there exists functions $h_j : \CalS_j \rightarrow \CalF_q$ for $j:1,2$, a function $g: \CalF_q \rightarrow \CalS,$ and a PMF $p_{V_1V_2X_1X_2} = p_{V_1X_1}p_{V_2X_2}$ on $\mathcal{V}_1\cross\mathcal{X}_1\cross\mathcal{V}_2\cross\mathcal{X}_2$, where $\mathcal{V}_1 = \mathcal{V}_1 = \mathcal{F}_q$,such that $f(s_1,s_2) = g(h_1(s_1)\oplus h_2(s_2))$ and
$H(h_1(S_1)\oplus_q h_2(S_2)) \leq  \min\{H(V_1),H(V_2)\}\! -H(U) +\chi(\{p_U;\rho_{u}\}),$ where $p_{U}$ and $\rho_{u}$ are as defined in Theorem \ref{thm:SumCQ_MAC}.
\end{prop}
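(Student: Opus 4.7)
The plan is to reduce the arbitrary-function problem to the sum-reconstruction problem already settled in Theorem \ref{thm:SumCQ_MAC}, exploiting the hypothesis that $f$ factors as $f(s_1,s_2)=g(h_1(s_1)\oplus_q h_2(s_2))$. The observation is that it suffices for the receiver to reconstruct the $\mathcal{F}_q$-sum $T_1\oplus_q T_2$, where $T_j\define h_j(S_j)$, since a componentwise application of the deterministic map $g$ then recovers $f(S_{1t},S_{2t}):t\geq 1$ with no additional error. Thus we recast the original source pair $(S_{1t},S_{2t})\sim \mathbb{W}_{S_1S_2}$ as an induced pair $(T_{1t},T_{2t})\in \mathcal{F}_q\times \mathcal{F}_q$ distributed IID with the pushforward PMF $\mathbb{W}_{T_1T_2}$ obtained from $\mathbb{W}_{S_1S_2}$ via $(h_1,h_2)$.

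First, I would encode the reduction explicitly at the code level. Given any $n$ and any CQ-MAC code $c_{\oplus}^{(n)}=(n,\tilde{e}_1,\tilde{e}_2,\lambda_{\mathcal{F}_q^n})$ for reconstructing the $\mathcal{F}_q$-sum of $(T_1,T_2)$ over the given CQ-MAC, we construct a CQ-MAC code $c_f^{(n)}=(n,e_1,e_2,\mu_{\mathcal{R}^n})$ for $f$ by defining the encoders $e_j(s_j^n)\define \tilde{e}_j(h_j^n(s_j^n))$ for $j\in [2]$, where $h_j^n$ denotes componentwise application of $h_j$, and by defining the POVM $\mu_{\mathcal{R}^n}\define\{\mu_{r^n}\}_{r^n\in\mathcal{R}^n}$ through $\mu_{r^n}\define \sum_{u^n:g^n(u^n)=r^n}\lambda_{u^n}$, where $g^n$ denotes componentwise application of $g$. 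Classical post-processing of POVM outcomes preserves the POVM property, so $\mu_{\mathcal{R}^n}$ is a valid POVM on $\mathcal{H}_Y^{\otimes n}$.

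Second, I would bound the error probability. Since $f(s_1,s_2)=g(h_1(s_1)\oplus_q h_2(s_2))$, the event $\{g^n(U^n)\ne f^n(S_1^n,S_2^n)\}$ is contained in $\{U^n\ne h_1^n(S_1^n)\oplus_q h_2^n(S_2^n)\}$, where $U^n$ denotes the POVM outcome of $\lambda_{\mathcal{F}_q^n}$ on the channel output. Consequently $\overline{\xi}(c_f^{(n)})\leq \overline{\xi}(c_\oplus^{(n)})$, with the latter evaluated on the induced IID source $(T_{1t},T_{2t})\sim \mathbb{W}_{T_1T_2}$. Applying Theorem \ref{thm:SumCQ_MAC} to $\mathbb{W}_{T_1T_2}$, we have a vanishing-error sum-reconstruction sequence whenever $H(T_1\oplus_q T_2)<\mathscr{R}(\rho_{\ulineCalX},q)$. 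The hypothesis of the proposition supplies a specific admissible tuple $(p_{V_1V_2U},\rho_u)\in\mathscr{P}(\rho_{\ulineCalX},q)$ (built from $p_{V_1X_1}p_{V_2X_2}$) achieving a single-letter rate equal to the claimed upper bound on $H(h_1(S_1)\oplus_q h_2(S_2))$. Since $\mathscr{R}(\rho_{\ulineCalX},q)$ is the supremum of $\mathscr{R}(p_{V_1V_2U},\rho_{\mathcal{V}})$ over $\mathscr{P}(\rho_{\ulineCalX},q)$, this tuple lower-bounds the sum-rate region and thus verifies the premise of Theorem \ref{thm:SumCQ_MAC}, giving $\lim_n \overline{\xi}(c_f^{(n)})=0$.

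The only subtle point, and the only one I regard as requiring care rather than insight, is the strict-versus-weak inequality: the hypothesis is stated with $\leq$, while Theorem \ref{thm:SumCQ_MAC} requires strict inequality. This is handled by a standard perturbation argument — one can invoke the theorem at any rate strictly smaller than the stated single-letter bound, driving the slack to zero as $n\to\infty$, which is equivalent to noting that the achievable region is closed. Apart from this, the proof is a clean reduction: the algebraic factorization of $f$ through the sum $\oplus_q$ allows the entire sum-reconstruction machinery (KM source coding plus nested-coset CQ-MAC channel coding) to be imported verbatim, with the classical maps $h_1,h_2,g$ composed at the endpoints of the encoders and decoder respectively.
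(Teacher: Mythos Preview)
Your proposal is correct and follows the same approach as the paper: reduce to Theorem~\ref{thm:SumCQ_MAC} via the factorization $f=g\circ(h_1\oplus_q h_2)$. The paper's own proof is the single line ``The proof follows from the proof of Theorem~\ref{thm:SumCQ_MAC},'' so your explicit construction of the encoders $e_j=\tilde e_j\circ h_j^n$, the post-processed POVM $\mu_{r^n}=\sum_{u^n:g^n(u^n)=r^n}\lambda_{u^n}$, and the error-inclusion argument simply spell out what the paper leaves implicit.
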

\begin{proof}
The proof follows from the proof of Theorem \ref{thm:SumCQ_MAC}.
\end{proof}
\begin{example}
Let $\mathcal{X}_1=\mathcal{X}_2=\mathcal{S}_1=\mathcal{S}_2=\mathcal{X}=\{0,1\}$, 
$\mathcal{H}=\mathbb{C}^2$,
and 
$\rho_{x_1,x_2}=(1-q)\sigma_{(x_1 \lor x_2)} + q \sigma_{(\bar{x}_1 \land \bar{x}_2)}$, 
where $\sigma_0,\sigma_1 \in \mathcal{D}(\mathcal{H})$ be arbitrary. 
Let $\rho(q) \triangleq (1-q)\sigma_0+q\sigma_1$. 
 Consider correlated symmetric individually uniform sources with $\mathbb{W}_{S_1|S_2}(1|0)=
\mathbb{W}_{S_1|S_2}(0|1)=p$ for $p \in (0,1)$. 
Let $f(S_1,S_2)=S_1 \lor S_2$.
Consider the sufficient conditions given by the unstructured coding scheme: $H(S_1,S_2) <
\chi(\{P_{X_1,X_2},\rho_{x_1,x_2}\})$,
with $X_1$ and $X_2$ being independent,
which can be simplified as
$1+h_b(p) <  S(\rho(0.5))-S(\rho(q))$.
This implies that the $f$ is not reconstructible
using the unstructured codes. 
We embed $f$ in the ternary field. In other words, the encoders and decoder work toward reconstructing 
$S_1 \oplus_3 S_2$. The sufficient condition given by the algebraic coding scheme turns out to be 
\[
H(S_1 \oplus_3 S_2) \!<\! H(X_1\!)-H(X_1 \oplus_3 X_2\!) +
\chi(\!\{p_{X_1 \oplus_3 X_2},\rho_{x_1 \oplus_3 x_2}\}\!),
\]
for some $p_{X_1,X_2}$, 
which can be simplified as 
\begin{align*}
&h_b(2p-p^2)+(2p-p^2)h_b(p/(2-p))
\!< \! \max_\theta [h_b(\theta)\!-\!h_b(2\theta\!-\!\theta^2)-(2\theta\!-\!\theta^2)h_b(\theta/(2-\theta)) +S(\rho((2\theta-\theta^2)*q))-S(\rho(q))]. 
\end{align*}
One can show that there exists choices for $p$, $q$, $\sigma_0$ and $\sigma_1$ such that this condition is satisfied. 
\end{example}
\appendices
\section{Characterization of Certain High Probable Subspaces}
\label{AppSec:CharHighProbSubAKAPinching}
In this appendix, we characterize certain high probability subspaces of tensor product quantum states. The statements we prove here are colloquially referred to as `pinching' \cite{2013Bk_Wil} in the literature. We prove statements in a form that can be used for use in the proof of both Theorems \ref{Thm:NCCAchievesCQPTP}, Lemma \ref{Lem:CharHighProbSubAKAPinching}. We begin with definitions of typical and conditional typical projectors. We adopt strong (frequency) typicality. All statements hold for most of the variants of notion of typicality. For concreteness, the reader may refer to \cite[App.~A]{2020Bk_PraPadShi}.
\begin{lemma}
 \label{Lem:CharHighProbSubAKAPinching}
 Suppose (i) $\mathcal{A},\mathcal{B}$ are finite sets, (ii) $p_{AB}$ is a PMF on $\mathcal{A} \times \mathcal{B}$, (iii) $(\rho_{b} \in \mathcal{D}(\mathcal{H}): b \in \mathcal{B})$ is a collection of density operators, $\rho_{a} \define \sum_{b \in \mathcal{B}}p_{B|A}(b|a)\rho_{b}$ for $a \in \mathcal{A}$ and $\rho = \displaystyle\sum_{a \in \mathcal{A}}p_{A}(a)\rho_{a} = \displaystyle\sum_{b \in \mathcal{B}} p_{B}(b)\rho_{b}$. There exists a strictly positive $\mu>0$, whose value depends only on $p_{AB}$, such that for every $\delta > 0$, there exists a $N(\delta) \in \naturals$ such that for all $n  \geq N(\delta)$, we have
 \begin{eqnarray}
  \label{Eqn:CharHighProbSubAKAPinching-1}
  \tr(\Pi_{\rho}^{\delta}\Pi_{a^{n}}^{\delta}\Pi_{\rho}^{\delta}\rho_{b^{n}}) \geq 1-\exp\{ -n\lambda\delta^{2} \}
  \nonumber
 \end{eqnarray}
whenever $(a^{n},b^{n}) \in T_{\frac{\delta}{4}}^{n}(p_{AB})$ where $\Pi_{a^{n}}^{\delta}$ is the conditional typical projector of $\rho_{a^{n}}= \otimes_{t=1}^{n}\rho_{a_{t}}$ \cite[Defn.~15.2.4]{2013Bk_Wil} and $\Pi_{\rho}^{\delta}$ is the unconditional typical projector \cite[Defn.~15.1.3]{2013Bk_Wil} of $\rho^{\otimes n}$ .
\end{lemma}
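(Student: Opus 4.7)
My plan is to reduce the quantum trace $\tr(\Pi_\rho^\delta \Pi_{a^n}^\delta \Pi_\rho^\delta \rho_{b^n})$ to two classical typicality estimates, and then glue them using the gentle measurement lemma. More precisely, I would first establish $\tr(\Pi_{a^n}^\delta \rho_{b^n}) \geq 1 - \epsilon_1(n,\delta)$ and $\tr(\Pi_\rho^\delta \rho_{b^n}) \geq 1 - \epsilon_2(n,\delta)$ with both $\epsilon_i$ decaying as $\exp\{-n\lambda_i \delta^2\}$, and then use the second estimate to replace $\Pi_\rho^\delta \rho_{b^n} \Pi_\rho^\delta$ by $\rho_{b^n}$ at the cost of $2\sqrt{\epsilon_2}$ in trace distance.

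For the first estimate, I write the spectral decomposition $\rho_a = \sum_z q_{Z|A}(z|a) \ket{g_{z|a}}\bra{g_{z|a}}$. Then $\Pi_{a^n}^\delta$ is diagonal in $\{\ket{g_{z^n|a^n}}\}$ and supported on $\{z^n \in T_\delta^n(q_{Z|A}|a^n)\}$, so defining the induced classical channel $W(z|a,b) \define \bra{g_{z|a}}\rho_b\ket{g_{z|a}}$ gives
\[
\tr(\Pi_{a^n}^\delta \rho_{b^n}) = \Pr_{Z^n \sim W^n(\cdot|a^n,b^n)}\bigl(Z^n \in T_\delta^n(q_{Z|A}|a^n)\bigr).
\]
The eigen-identity $\rho_a\ket{g_{z|a}} = q_{Z|A}(z|a)\ket{g_{z|a}}$ forces the crucial marginalization $\sum_b p_{B|A}(b|a) W(z|a,b) = q_{Z|A}(z|a)$. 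Since $(a^n,b^n) \in T_{\delta/4}^n(p_{AB})$, the expected conditional empirical distribution of $Z^n$ given $a^n$ is $O(\delta)$-close to $q_{Z|A}$, and the $Z_t$ are conditionally independent, so a Hoeffding/Sanov-type bound yields the desired exponential concentration. The second estimate is obtained by the same reduction using $\rho = \sum_y q_Y(y) \ket{f_y}\bra{f_y}$ and $V(y|b) \define \bra{f_y}\rho_b\ket{f_y}$, noting that marginal $\delta/4$-typicality of $b^n$ is inherited from joint typicality of $(a^n,b^n)$ and that $\sum_b p_B(b) V(y|b) = q_Y(y)$.

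To close, the gentle measurement lemma applied to $\Pi_\rho^\delta$ and $\rho_{b^n}$ yields $\|\rho_{b^n} - \Pi_\rho^\delta \rho_{b^n} \Pi_\rho^\delta\|_1 \leq 2\sqrt{\epsilon_2}$, whence cyclicity of the trace together with $\|\Pi_{a^n}^\delta\|_\infty \leq 1$ gives
\[
\bigl|\tr(\Pi_\rho^\delta \Pi_{a^n}^\delta \Pi_\rho^\delta \rho_{b^n}) - \tr(\Pi_{a^n}^\delta \rho_{b^n})\bigr| \leq 2\sqrt{\epsilon_2}.
\]
Combining with the first estimate bounds the target trace below by $1 - \epsilon_1 - 2\sqrt{\epsilon_2} \geq 1 - \exp\{-n\lambda\delta^2\}$ for some $\lambda > 0$ depending only on $p_{AB}$. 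The main obstacle will be the classical reduction: one has to be careful about choosing the typicality parameter for the induced classical variables $Z^n, Y^n$ strictly larger than the $\delta/4$ slack carried by $(a^n,b^n)$, so that their (conditional) typical sets have exponentially near-unit probability under the induced product channels, and one has to verify that the Hoeffding constants depend only on $p_{AB}$ (through the smallest positive entries and the cardinalities of the spectral decompositions). The $\sqrt{\cdot}$ from gentle measurement only halves the exponent, so the final rate $\lambda$ can be taken as $\min(\lambda_1, \lambda_2/2)$; the remaining work is pure bookkeeping.
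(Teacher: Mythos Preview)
Your proposal is correct and follows essentially the same route as the paper: the paper also reduces $\tr(\Pi_\rho^\delta \Pi_{a^n}^\delta \Pi_\rho^\delta \rho_{b^n})$ to the two classical estimates $\tr(\Pi_{a^n}^\delta \rho_{b^n})$ and $\tr(\Pi_\rho^\delta \rho_{b^n})$ by introducing exactly your induced channels $W(z|a,b)=\bra{g_{z|a}}\rho_b\ket{g_{z|a}}$ and $V(y|b)=\bra{f_y}\rho_b\ket{f_y}$, uses the same marginalization identities, and glues the two via the gentle measurement lemma. The only cosmetic differences are notation and the exact numerical constants in the gentle measurement step.
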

\begin{proof}
 We rename $\mathcal{A}=\mathcal{V}$, $\mathcal{B}=\mathcal{X}$, $p_{AB}=p_{VX}$, $a$ as $v$ and $b$ as $x$.
 We have
\begin{align}
\tr(\Pi_{\rho}^{\delta}
\Pi_{v^n}^{\delta} \Pi_{\rho}^{\delta} \rho_{x^n}) &=
\tr(\Pi_{\rho}^{\delta}
\Pi_{v^n}^{\delta} \rho_{x^n} \Pi_{\rho}^{\delta}) \nonumber \\
&\geq \tr(\Pi_{v^n}^{\delta} \rho_{x^n}) -\frac{1}{2} \left\|
\rho_{x^n} -\Pi_{\rho}^{\delta} \rho_{x^n} \Pi_{\rho}^{\delta} \right\|.
\end{align}

In the following we derive a lower bound on $\tr(\Pi_{v^n}^{\delta}\rho_{x^n})$ and
derive an upper bound on $\left\| \rho_{x^n} -\Pi_{\rho}^{\delta} \rho_{x^n} \Pi_{\rho}^{\delta}
\right\|$. Toward the deriving the former, we recall that we have
$(v^n,x^n) \in T_{\delta/2}{^n}(p_{VX})$. Let us define:
\begin{equation}
  p_{Y|XV}(y|x,v) :=\braket{e_{y|v}|\rho_x|e_{y|v}}, \nonumber
  \label{eq:eq(2)}
  \end{equation}
for all $(x,v,y) \in \mathcal{X} \times \mathcal{V} \times \mathcal{Y}$.

Clearly, we have $p_{Y|XV}(y|x,v) \geq 0$, and
$\sum_{y \in \mathcal{Y}} p_{Y|XV}(y|x,v)=\sum_{y \in \mathcal{Y}} \braket{e_{y|v} | \rho_x|
  e_{y|v}}=\tr(\rho_x)=1$. Hence we see that $p_{Y|XV}$ is a stochastic matrix.

Next we note that
\begin{align}
  \sum_{x \in \mathcal{X}} p_{Y|XV}(y|x,v)p_{XV}(x,v) &=
  \sum_{x \in \mathcal{X}} p_{XV}(x,v) \braket{e_{y|v} |\rho_x | e_{y|v}} \nonumber \\
  &= p_{V}(v) \braket{e_{y|v} |\sum_{x \in \mathcal{X}} p_{X|V}(x|v) \rho_x |e_{y|v}} \nonumber \\
  &= p_V(v)\braket{e_{y|v}|\rho_v |e_{y|v}} = p_V(v) q_{Y|V}(y|v),
  \label{equation_1}
  \end{align}
   where we have used the spectral decomposition of $\rho_v$. 

   Observe that  if $(x^n,v^n) \in T_{\delta/4}^n(p_{XV})$, and
   $y^n\in T_{\delta}^n(p_{XV}p_{Y|XV}|x^n,v^n)$, then we have
   $(x^n,v^n,y^n) \in T_{\delta}^n(p_{XV}p_{Y|XV})$. This implies that we have
   $(v^n,y^n) \in T_{\delta}^n(p_{VY})$, where $p_{VY}$ is the marginal
   of $p_{XV}p_{Y|XV}$. Using this and (\ref{equation_1}), we see that
   $(v^n,y^n) \in T_{\delta}^n(p_V q_{Y|V})$. In summary, we see that if
   $(x^n,v^n) \in T_{\delta/4}(p_{XV})$, then we have
   \[
   T_{\delta}^n(p_{XV}p_{Y|XV}|x^n,v^n) \subseteq \left\{y^n: (v^n,y^n) \in
   T_{\delta}^n(p_V q_{Y|V})  \right\}.
   \]
  
   We are now set to provide the promised lower bound. Consider
   \begin{align}
     \tr(\Pi_{v^n}\rho_{x^n}) &= \tr \left(\left[ \sum_{y^n: (v^n,y^n) \in
         T_{\delta}^n(p_Vq_{Y|V})} \bigotimes_{t=1}^n \ket{e_{y_t|v_t}} \bra{e_{y_t|v_t}} \right]
   \left[   \bigotimes_{j=1}^n \rho_{x_j} \right] \right) \\
     &= \tr \left(\left[ \sum_{y^n: (v^n,y^n) \in
       T_{\delta}^n(p_Vq_{Y|V})} \bigotimes_{t=1}^n \ket{e_{y_t|v_t}} \bra{e_{y_t|v_t}}
     \rho_{x_t} \right] \right) \\
   &=\sum_{y^n: (v^n,y^n) \in
     T_{\delta}^n(p_Vq_{Y|V})} \prod_{t=1}^n \braket{e_{y_t|v_t}|\rho_{x_t}|e_{y_t|v_t}} \\
   &\geq \sum_{y^n \in
     T_{\delta}^n(p_{XV}p_{Y|XV}|x^n,v^n))} \prod_{t=1}^n p_{Y|XV}(y_t|x_t,v_t) \\
   &\geq 1- 2|\mathcal{X}||\mathcal{Y}||\mathcal{V}| \exp \left\{-\frac{2n \delta^2
   p_{XVY}(x^*,v^*,y^*)}{4(\log(|\mathcal{X}||\mathcal{Y}||\mathcal{V}|))^2} \right\},
   \end{align}
   where we used the definition (\ref{eq:eq(2)}) in the last equality.
   
   We next provide the upper bound. Note from the Gentle measurements lemma
   \cite[Lemma 9.4.2]{2013Bk_Wil}, we have $\| \rho_{x^n} -\Pi_{\rho}^{\delta}\rho_{x^n}
   \Pi_{\rho}^{\delta}|| \leq 3\sqrt{\epsilon}$ if $\tr(\Pi_{\rho}^{\delta}
   \rho_{x^n})\geq 1-\epsilon$. In the following we provide a lower bound on
  $\tr(\Pi_{\rho}^{\delta}   \rho_{x^n})$. 
   Recall that $\Pi_{\rho}^{\delta} =\sum_{y^n \in T_{\delta}^n(s_Y)} \bigotimes_{t=1}^n
     \ket{g_{y_t}}\bra{g_{y_t}}$, where
     \[
\rho=\sum_{y \in \mathcal{Y}} s_Y(y) \ket{g_y}\bra{g_y},
\]
is the spectral decomposition of $\rho$, and $\rho=\sum_{x \in \mathcal{X}} p_X(x)\rho_x$.
Let $\hat{p}_{Y|X}(y|x):=\braket{g_y|\rho_x|g_y}$, for all
$(x,y)\in \mathcal{X} \times \mathcal{Y}$. Note that $\hat{p}_{Y|X}$ is not
related to $p_{Y|X}$ defined previously. We note that $\hat{p}_{Y|X}(y|x)\geq 0$, and
$\sum_{y \in \mathcal{Y}} \hat{p}_{Y|X}(y|x)=\sum_{y \in \mathcal{Y}} \braket{g_y|\rho_x|g_y}=
\tr(\rho_x)=1$ for all $x \in \mathcal{X}$.
Thus we see that $\hat{p}_{Y|X}$ is a stochastic matrix. It can also be noted that
\[
\sum_{x\in \mathcal{X}} \hat{p}_{Y|X}(y|x)p_X(x) =\braket{g_y|\sum_{x \in \mathcal{X}}
  p_X(x)\rho_x|g_y}=\braket{g_y|\rho|g_y}=s_Y(y),
\]
for all $y \in \mathcal{Y}$. 
This implies that the condition  $y^n \in T_{\delta}^n(s_Y)$ is
equivalent to the condition $y^n \in T_{\delta}^n(\hat{p_{Y}})$,
where $\hat{p_Y}(y)=\sum_{x \in \mathcal{X}} \hat{p}_{Y|X}(y|x)p_X(x)$. Moreover,
if $x^n \in T_{\delta/2}^n(p_X)$, and
$y^n \in T_{\delta}^n(p_X \hat{p}_{Y|X}|x^n)$, then we have
$(x^n,y^n) \in T_{\delta}^n(p_X\hat{p}_{Y|X})$. Consequently, we have
$y^n\in T_{\delta}^n(\hat{p_Y})$, which in turn implies that
$y^n \in T_{\delta}^n(s_Y)$. In essence, we have that if $x^n \in T_{\delta/2}^n(p_X)$
then $T_{\delta}^n(p_X\hat{p}_{Y|X}|x^n) \subseteq T_{\delta}^n(s_Y)$.
Now we are set to provide the lower bound on $\tr(\Pi_{\rho}^{\delta}\rho_{x^n})$ as follows:
\begin{align}
  \tr(\Pi_{\rho}^{\delta} \rho_{x^n}) &= \tr \left(\sum_{y^n \in T_{\delta}(s_Y)} \bigotimes_{t=1}^n
    \ket{g_{y_t}}\bra{g_{y_t}} \rho_{x_t} \right)=\sum_{y^n \in T_{\delta}(s_Y)}
      \prod_{t=1}^n \braket{g_{y_t} |\rho_{x_t} |g_{y_t}} \nonumber \\
      &= \sum_{y^n \in T_{\delta}(s_Y)}
      \prod_{t=1}^n \hat{p}_{Y|X}(y_t|x_t) \geq \sum_{y^n \in T_{\delta}(\hat{p}_{Y|X}p_X|x^n)} 
        \prod_{t=1}^n \hat{p}_{Y|X}(y_t|x_t) \nonumber \\
         &\geq 1-2|\mathcal{X}||\mathcal{Y}| \exp \left\{ -\frac{2n \delta^2 p_X^2(x^*)
          \hat{p}_{Y|X}^2(y|x)}{4(\log(|\mathcal{X}||\mathcal{Y}|))^2} \right\}. 
  \end{align}
We therefore have
\[
\| \rho_{x^n} -\Pi_{\rho}^{\delta} \rho_{x^n} \Pi_{\rho}^{\delta}\| \leq
6 |\mathcal{X}||\mathcal{Y}| \exp \left\{ -\frac{2n \delta^2 p_X^2(x^*)
  \hat{p}_{Y|X}^2(y|x)}{4(\log(|\mathcal{X}||\mathcal{Y}|))^2} \right\},
\]
and
\[
\tr (\Pi_{v^n} \rho_{x^n} ) \geq 1-2|\mathcal{X}||\mathcal{Y}|||\mathcal{V}|
\frac{2n \delta^2 p_X^2(x^*)
          \hat{p}_{Y|X}^2(y|x)}{4(\log(|\mathcal{X}||\mathcal{Y}|))^2},
\]
thereby permitting us to conclude that
\[
\tr (\Pi_{\rho}^{\delta} \Pi_{v^n}^{\delta} \Pi_{\rho}^{\delta} \rho_{x^n}) \geq
\tr (\Pi_{v^n}^{\delta} \rho_{x^n} )-\frac{1}{2} \| \rho_{x^n}-\Pi_{\rho}^{\delta} \rho_{x^n}
\Pi_{\rho}^{\delta} \| \geq 1-\frac{2n \delta^2 p_X^2(x^*)
          \hat{p}_{Y|X}^2(y|x)}{4(\log(|\mathcal{X}||\mathcal{Y}|))^2},
\]
if $(x^n,v^n) \in T_{\delta/2}^n(p_{XV})$. 
\end{proof}

\bibliographystyle{IEEEtran}
{
\bibliography{ComputeOverCQMAC}
\end{document}